\pgfplotsset{width=9cm,compat=1.5.1}
\newtheorem{thm}{Theorem}
\newtheorem{prop}{Proposition}
\newtheorem{lem}{Lemma}
\newtheorem{cor}{Corollary}
\newtheorem{exm}{Example}
\newtheorem{con}{Conjecture}
\newtheorem{rem}{Remark}
\def \Nl {{\mathbbm N}}
\def \Zl {{\mathbbm Z}}
\def \Ql {{\mathbbm Q}}
\def \Rl {{\mathbbm R}}
\def \Cl {{\mathbbm C}}
\def \e {{\bf e}}
\def \x {{\bf x}}
\newcommand{\ob}[1]{\left(#1\right)}
\newcommand{\cb}[1]{\left\lbrace #1\right\rbrace}
\newcommand{\tb}[1]{\left[#1\right]}
\newcommand{\up}[1]{\overset{#1}{\uplus}~}
\newcommand{\eig}[1]{\sigma\left(#1\right)}
\newcommand{\be}{\mathbf e}
\begin{document}
\title{Quantum walks on blow-up graphs}

\author{
	Bikash Bhattacharjya,\textsuperscript{\!\!1} Hermie Monterde,\textsuperscript{\!\!2} Hiranmoy Pal\ \textsuperscript{\!\!3}
}

\maketitle

\begin{abstract}
A blow-up of $n$ copies of a graph $G$ is the graph $\up{n}G$ obtained by replacing every vertex of $G$ by an independent set of size $n$, where the copies of two vertices in $G$ are adjacent in the blow-up if and only if the two vertices are adjacent in $G$. Our goal is to investigate quantum state transfer on a blow-up graph $\up{n}G$ relative to the adjacency matrix. We characterize strong cospectrality, periodicity, perfect state transfer (PST) and pretty good state transfer (PGST) in blow-up graphs. It turns out, if $\up{n}G$ admits PST or PGST, then one must have $n=2.$ In particular, if $G$ has an invertible adjacency matrix, then each vertex in $\up{2}G$ pairs up with a unique vertex to exhibit strong cospectrality. We then apply our results to determine infinite families of graphs whose blow-ups admit PST and PGST.\\

{\it Keywords:} perfect state transfer, pretty good state transfer, graph spectra, blow-up, twin vertices, adjacency matrix. 

{\it MSC: 05C50, 05C76, 15A16, 15A18, 81P45.}
\end{abstract}

%%%%%%%%%%% Introduction %%%%%%%%%%%%%%%%%%%%%%%%%%%%%%%%%%%%

\section{Introduction}
A quantum spin network is modelled by an undirected graph $G$, where the vertices and edges of $G$ represent the qubits and their interactions in the network. A \textit{continuous-time quantum walk} on $G$ is described by the unitary matrix
\begin{center}
    $U(t)=\exp{\ob{itA}}$
\end{center}
where $t\in\Rl$, $i=\sqrt{-1}$, and $A$ is the adjacency matrix of $G$. Here, $|U(t)_{u,v}|^2$ represents the probability that the quantum state assigned at vertex $u$ is found in vertex $v$ at time $t$. We say \textit{perfect state transfer} (PST) between $u$ and $v$ in $G$ at time $\tau>0$ if $|U\ob{\tau}_{u,v}|=1$.
In this case, the quantum state initially at vertex $u$ is transmitted to vertex $v$ at time $\tau$ with probability one. If $|U\ob{\tau}_{u,u}|=1$, then $u$ is said to be \textit{periodic} in $G$ at time $\tau$. A graph is \textit{periodic} if it is periodic at all vertices at the same time.

PST is critical for short distance communication in a physical quantum computing scheme \cite{JOHNSTON2017375}. The study of PST was initiated by Bose \cite{bose} and Christandl et al. \cite{chr1}. In \cite{chr2}, we find that $P_2$ and $P_3$ are the only paths that admit PST. Following this negative result, several classes of graphs admitting PST have been studied, like cubelike graphs \cite{ber, che}, distance-regular graphs \cite{cou2}, integral circulant graphs \cite{mil4, saxe}, Hadamard diagonalizable graphs \cite{JOHNSTON2017375,mclaren2023weak}, and quotient graphs \cite{bac,ge}. In \cite{chr2}, Christandl et al.\ used Cartesian products to construct graphs with PST over long distances. PST was also studied under joins \cite{ange1,kirkland2023quantum}, non-complete extended p-sums \cite{pal1,pal2} and other graph products \cite{ack1,ge,cou}.
\par

Since PST in simple unweighted graphs is rare \cite{god2}, the notion of \textit{pretty good state transfer} (PGST) was introduced \cite{god1,vin}. A graph $G$ exhibits PGST between $u$ and $v$ if there is a sequence $\tau_k\in\Rl$ such that $\lim_{k\rightarrow \infty}|U\ob{\tau_k}_{u,v}|=1$, i.e., $|U(t)_{u,v}|^2$ can be made arbitrarily close to one through appropriate choices of $t$. If $\lim_{k\rightarrow \infty}|U\ob{\tau_k}_{u,u}|=1$, then $G$ is said to be \textit{almost periodic at $u$} with respect to the sequence $\tau_k$, and $G$ is said to be \textit{almost periodic} if there is a sequence $\tau_k\in\Rl$ such that
\[\lim\limits_{k\to\infty}U\ob{\tau_k}=\gamma I,~\text{for some }\gamma\in\Cl,\]
where $I$ is the identity matrix of appropriate order. %Several classes of trees have been investigated for PGST, such as the 
Infinite families of paths \cite{cou3, god4, bom}, double stars \cite{fan} and circulant graphs are known to exhibit PGST \cite{pal6,pal7,pal4}. 

Our goal in this paper is to systematically study quantum state transfer on blow-up graphs. Here, we consider the more familiar type of blow-up where every vertex of a graph is replaced by an independent set. This variant of the blow-up operation is a special case of the lexicographic product. The \textit{lexicographic product} $G[H]$ of $G$ and $H$ is the graph $G[H]$ obtained by replacing every vertex of $G$ by a copy of $H$, and adding all possible edges between the vertices in the copies of $H$ corresponding to adjacent vertices in $G$. In \cite{ge}, Ge at al.\ provided sufficient conditions for PST to occur in lexicographic products. Their results about PST in $G[H]$ require that $H$ exhibits PST. But in our case, $H$ is an empty graph which cannot exhibit PST, and so their results are not applicable. This is our main motivation for studying quantum state transfer on blow-up graphs. Another motivation is to explore a graph operation that induces strong cospectrality between many pairs of vertices. Strong cospectrality is an important property because it is a requirement PGST. It turns out that if $0$ is not in the eigenvalue support of vertex $u$ in $G$, then the two copies of $u$ are strongly cospectral in $\up{2}G$ (see Theorem \ref{sc}(2)). Hence, each vertex in $\up{2}G$ pair up to exhibit strong cospectrality whenever $G$ has an invertible adjacency matrix (see Corollary \ref{sccor}). Thus, blow-ups have the advantage of producing pairs of vertices that are promising for high probability state transfer. Lastly, the two copies of a vertex in $G$ are twins in $\up{2}G$. This allows us to utilize some of the recent results on quantum state transfer between twins  \cite{kirk2,mon1,fr,pal8}. In particular, we give characterizations of strong cospectrality, PST, PGST and periodicity in blow-up graphs (see Theorems \ref{sc}-\ref{per}). We then characterize PST and PGST in blow-ups of special classes of graphs (see Sections \ref{qstgs}-\ref{cones}). We note that while  PST in $\up{2}G$ only occurs between vertices at distance two, one may use the Cartesian product operation to obtain relatively sparse graphs that admit PST at large distances (Section \ref{cons}).

\section{Blow-up}
Throughout, we let $G$ be a simple (loopless) connected undirected graph with vertex set $V$. The \textit{blow-up} of $G$, denoted by $\up{n}G$, is the graph with vertex set $\Zl_n\times V$, and two vertices $(l,u)$ and $(m,v)$ are adjacent in $\up{n}G$ if and only if the vertices $u$ and $v$ are adjacent in $G$. It is immediate that if $G$ is an $r$-regular graph then $\up{n}G$ is $nr$-regular. If $P_3$ is the path on three vertices then $\up{2}P_3$ is the graph shown in Figure \ref{fg1}.

A pair of vertices $u$ and $v$ in a graph $G$ are called twins if $N\ob{u}\setminus\cb{v}=N\ob{v}\setminus\cb{u}$. A twin set is a set of vertices in a graph whose elements are pairwise twins. If $u$ is a vertex of $G$, then $T_u=\cb{(j,u):j\in\Zl_n}$ is a set of twins in $\up{n}G.$ If $n\geq 2$, then each vertex in in $\up{n}G$ is contained in a twin set of size at least $n$.

\begin{prop}
\label{twins}
Let $u$ and $v$ be two vertices in $G$. The set $T_u\cup T_v$ is a twin set in $\up{n}G$ if and only if $u$ and $v$ are non-adjacent twins in $G.$
\end{prop}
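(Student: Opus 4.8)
The plan is to reduce the entire statement to a single computation of neighborhoods in the blow-up. For any vertex $(l,u)$ of $\up{n}G$, a vertex $(m,w)$ is adjacent to it precisely when $w$ is adjacent to $u$ in $G$, and this is independent of the first coordinate $m$; hence the neighborhood of $(l,u)$ in $\up{n}G$ equals $\Zl_n\times N_G\ob{u}$ and, crucially, does not depend on $l$. Since $G$ is loopless, $u\notin N_G\ob{u}$, so no vertex of $T_u$ lies in this common neighborhood $\Zl_n\times N_G\ob{u}$ shared by all vertices of $T_u$; this recovers the already-noted fact that $T_u$ is itself a twin set. Every step below is bookkeeping on this product description of neighborhoods.

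For the sufficiency, suppose $u$ and $v$ are non-adjacent twins in $G$. Non-adjacency gives $u\notin N_G\ob{v}$ and $v\notin N_G\ob{u}$, so the twin condition $N_G\ob{u}\setminus\cb{v}=N_G\ob{v}\setminus\cb{u}$ collapses to $N_G\ob{u}=N_G\ob{v}$. I would then check pairwise twinship for the three types of pairs in $T_u\cup T_v$: two vertices of $T_u$ (resp.\ $T_v$) are twins by the observation above, while for a cross pair $(l,u),(m,v)$ the equality $N_G\ob{u}=N_G\ob{v}$ yields equal neighborhoods $\Zl_n\times N_G\ob{u}=\Zl_n\times N_G\ob{v}$, and non-adjacency of $u,v$ shows neither of the two vertices lies in this common set; hence they are twins. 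Thus $T_u\cup T_v$ is a twin set.

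For the necessity, suppose $T_u\cup T_v$ is a twin set; I would first establish non-adjacency of $u,v$ and then equality of their neighborhoods. Pick a cross pair $(l,u),(m,v)$, which must be twins. If $u$ and $v$ were adjacent, then $v\in N_G\ob{u}$ and $u\in N_G\ob{v}$; using $n\ge 2$, choose $k\ne m$ and note that $(k,v)$ lies in $N\ob{(l,u)}=\Zl_n\times N_G\ob{u}$ but, since $v\notin N_G\ob{v}$, does not lie in $N\ob{(m,v)}=\Zl_n\times N_G\ob{v}$; as $(k,v)$ differs from both $(l,u)$ and $(m,v)$, this contradicts the twin condition. Hence $u$ and $v$ are non-adjacent, so the cross pair is non-adjacent and the twin condition forces $\Zl_n\times N_G\ob{u}=\Zl_n\times N_G\ob{v}$, i.e.\ $N_G\ob{u}=N_G\ob{v}$, which (again by non-adjacency) says exactly that $u$ and $v$ are non-adjacent twins in $G$.

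The one genuine obstacle lies in the necessity direction, namely ruling out the adjacent case, and this is where the hypothesis $n\ge 2$ is essential: for $n=1$ we have $\up{1}G\cong G$, and then an \emph{adjacent} twin pair would also yield a two-element twin set, so the stated equivalence would fail. The argument exploits the extra copies $(k,v)$ with $k\ne m$ to separate the two neighborhoods, which is only possible when there is more than one copy. All remaining verifications are immediate from the product structure $\Zl_n\times N_G\ob{\cdot}$ of neighborhoods.
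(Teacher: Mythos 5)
Your proof is correct. The paper states Proposition \ref{twins} without proof, treating it as immediate, and your argument—reducing everything to the observation that $N\ob{(l,u)}=\Zl_n\times N_G\ob{u}$ independently of $l$, then checking the three pair types for sufficiency and using a spare copy $(k,v)$, $k\neq m$, to rule out adjacency for necessity—is exactly the bookkeeping the authors leave to the reader. Your remark that the equivalence implicitly requires $n\geq 2$ (for $n=1$ an adjacent twin pair in $G$ would make $T_u\cup T_v$ a twin set in $\up{1}G\cong G$, breaking the ``only if'' direction) is a genuine and accurate sharpening of the statement, consistent with the paper's surrounding discussion which assumes $n\geq 2$.
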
 

We also state Kronecker's approximation theorem which is useful in classifying graphs with PGST. 

\begin{thm}\cite{apo}\label{kron}
If $\alpha_1,\ldots,\alpha_l$ are arbitrary real numbers and if $1,\theta_1,\ldots, \theta_l$ are real, algebraic numbers linearly independent over $\Ql$ then for $\epsilon>0$ there exist $q\in\Zl$ and $p_1,\ldots,p_l\in\Zl$ such that
\[\left|q\theta_j-p_j-\alpha_j\right|<\epsilon.\]
\end{thm}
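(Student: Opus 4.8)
The statement is the classical multidimensional Kronecker approximation theorem, so rather than reproduce the reference I sketch the equidistribution argument one would use to establish it. The plan is to recast the inequality $\mb{q\theta_j-p_j-\alpha_j}<\epsilon$ as a density statement on the torus $(\Rl/\Zl)^l$: writing $x_q=(q\theta_1,\ldots,q\theta_l)\bmod\Zl^l$ and $\alpha=(\alpha_1,\ldots,\alpha_l)\bmod\Zl^l$, the freedom to choose the integers $p_j$ means the conclusion is exactly the assertion that the orbit $\cb{x_q:q\in\Zl}$ comes within $\epsilon$ (in the sup metric on the torus) of the point $\alpha$. Hence it suffices to prove that this orbit is dense in $(\Rl/\Zl)^l$, and since density over positive $q$ already implies the existence of the required $q\in\Zl$, I may restrict to $q\geq 1$.

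First I would establish density via Weyl's equidistribution criterion, which guarantees that the sequence $(x_q)_{q\geq 1}$ is equidistributed in $(\Rl/\Zl)^l$ — and in particular dense — provided that for every nonzero integer vector $\vec{k}=(k_1,\ldots,k_l)\in\Zl^l$ one has $\tfrac1N\sum_{q=1}^N e^{2\pi i\langle\vec{k},x_q\rangle}\to 0$ as $N\to\infty$. The core computation is then short: since $\langle\vec{k},x_q\rangle=q\beta$ with $\beta=\sum_{j=1}^l k_j\theta_j$, the Weyl sum is just the geometric sum $\tfrac1N\sum_{q=1}^N e^{2\pi i q\beta}$.

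The one place the hypothesis enters is here. Because $1,\theta_1,\ldots,\theta_l$ are linearly independent over $\Ql$, no nontrivial integer combination $\sum_j k_j\theta_j$ can itself be an integer; thus $\beta\notin\Zl$ and $e^{2\pi i\beta}\neq 1$. The geometric sum then equals $e^{2\pi i\beta}(e^{2\pi i N\beta}-1)/(e^{2\pi i\beta}-1)$, whose modulus is bounded by $2/\mb{e^{2\pi i\beta}-1}$ uniformly in $N$, so dividing by $N$ forces the limit $0$. This verifies Weyl's criterion, yielding equidistribution, hence density, and the theorem follows.

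I expect the only genuine obstacle to be the justification of Weyl's criterion itself — the passage from "the exponential sums vanish" to "the orbit equidistributes" — which is the standard Weierstrass/Fej\'er approximation of box indicators by trigonometric polynomials; taking the criterion as known reduces the whole argument to the two-line geometric-series estimate above. An alternative route that sidesteps equidistribution entirely is to observe that the closure of $\cb{x_q}$ is a closed subgroup of the torus: were it proper, Pontryagin duality would supply a nontrivial character annihilating it, i.e.\ a nonzero $\vec{k}$ with $\sum_j k_j\theta_j\in\Zl$, again contradicting linear independence. Finally, I would note that algebraicity of the $\theta_j$ is never used — only the $\Ql$-linear independence of $1,\theta_1,\ldots,\theta_l$ — so the hypothesis as stated is stronger than the proof requires, the algebraicity being relevant only for the paper's eigenvalue applications.
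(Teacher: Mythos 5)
Your proof is essentially correct, but note there is nothing in the paper to compare it against: Theorem \ref{kron} is quoted from the reference \cite{apo} and the paper supplies no proof of it, so any argument here is ``a different route'' by default. Your Weyl-equidistribution proof is the standard one and the execution is right: the reduction of the simultaneous inequalities $\left|q\theta_j-p_j-\alpha_j\right|<\epsilon$ to density of the orbit $\left(q\theta_1,\ldots,q\theta_l\right)\bmod\Zl^l$ in the torus is exact (choosing the $p_j$ absorbs the integer parts), the geometric-sum bound $2/\left|e^{2\pi i\beta}-1\right|$ is uniform in $N$ as you say, and you correctly identify the single point where the hypothesis enters: $\Ql$-linear independence of $1,\theta_1,\ldots,\theta_l$ is precisely the statement that $\sum_j k_j\theta_j\notin\Zl$ for every nonzero integer vector $(k_1,\ldots,k_l)$, which is what makes every nontrivial Weyl sum vanish in the limit. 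Treating Weyl's criterion itself as known is a reasonable black box for a cited classical theorem, and your alternative via the closure of the cyclic subgroup and Pontryagin duality is also valid and somewhat slicker, since it gets density without equidistribution. Two of your side remarks deserve emphasis: first, you are right that algebraicity of the $\theta_j$ is never used --- Kronecker's theorem holds for arbitrary reals, and the ``algebraic'' hypothesis in the statement merely reflects the intended application to graph eigenvalues; second, your argument actually delivers more than the statement asks (equidistribution, hence infinitely many admissible $q$), which is consistent with how the paper uses the theorem, namely applying it with $\epsilon=\frac{1}{2\pi k}$ for each $k$ to extract the sequences $\tau_k$ in Theorems \ref{th1} and \ref{pnthm}.
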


\begin{figure}[!ht]
\centering
\begin{tikzpicture}[scale=.35,auto=left, rotate=0]
\tikzstyle{every node}=[circle, thick, fill=white, scale=0.65]

 \node[draw] (1) at (0,0)  {$(0,a)$};
 \node[draw] (2) at (7,0) {$(0,b)$};
 \node[draw] (3) at (14,0) {$(0,c)$};

 \node[draw] (4) at (0,5) {$(1,a)$};
 \node[draw] (5) at (7,5) {$(1,b)$};
 \node[draw] (6) at (14,5) {$(1,c)$};
 
  \draw[thick,black!70] (1)--(2)--(3);
  \draw[thick,black!70] (4)--(5)--(6);
  \draw[thick,black!70] (1)--(5)--(3);
  \draw[thick,black!70] (4)--(2)--(6);

 \end{tikzpicture}
\caption{The graph $\up{2}P_3$, a blow-up of two copies of $P_3$}
\label{fg1}
\end{figure}

%%%%%%%%%%%%%%%%%%%%%%%%%%%%%%%%%%%%%%%%%%%%%%%%%%%%%%%%%%%%%
\section{Transition matrix}\label{qst}

Let $G$ be a graph on $m$ vertices with adjacency matrix $A.$ We refer to the eigenvalues of $A$ as the the eigenvalues of $G.$ Suppose $J_n$ is the square matrix of order $n\times n$ with all entries equal to $1$. Considering the lexicographic ordering on $\Zl_n\times V(G),$ the adjacency matrix of $\up{n}G$ is evaluated as $A_n=J_n\otimes A,$ which is the tensor product of the matrices $J_n$ and $A$. Let $\textbf{1}$ be the vector of length $n$ with all entries equal to $1$, and let $\lambda$ be an eigenvalue of $A$ associated to the eigenvector $\x.$ Then
\[\left(J_n\otimes A\right)\left(\textbf{1}\otimes \x\right)=J_n\textbf{1}\otimes A\x=n\lambda\left(\textbf{1}\otimes \x\right).\]

Let $\sigma(G)=\{\lambda_1,\lambda_2,\ldots,\lambda_d\}$ denote the distinct eigenvalues of $A$, and $E_j$ be the idempotent associated with $\lambda_j$. The \textit{eigenvalue support} of a vertex $u$ in $G$ is the set
\[\sigma_u(G)=\cb{\lambda\in\sigma(G): E_{\lambda}\e_{u}\neq \mathbf{0}}.\]
In case $0\in\eig{G},$ we assume $\lambda_d=0$. Then the idempotents of $A_n$  corresponding to the eigenvalues $n\lambda_1,n\lambda_2,\ldots,n\lambda_{d-1}$ and $0$ are respectively given by
\begin{eqnarray}
\label{sc1}
\frac{1}{n}J_n\otimes E_1,\ \ \frac{1}{n}J_n\otimes E_2,\ \ \ldots,\ \ \frac{1}{n}J_n\otimes E_{d-1},\ \ I_{mn}-\frac{1}{n}J_n\otimes\left(I_m-E_d\right).
\end{eqnarray}
Consequently, the spectral decomposition of the transition matrix of $\up{n}G$ becomes
\begin{eqnarray}\label{peq3}
    U_n(t)&=&\exp{\left(-itA_n\right)}=\sum\limits_{j=1}^{d-1}\exp{\left(-in\lambda_j t\right)}\cdot\frac{1}{n}J_n\otimes E_j+\left[I_{mn}-\frac{1}{n}J_n\otimes\left(I_m-E_d\right)\right].
\end{eqnarray}
For $(0,u),(1,u)\in \Zl_n\times V(G),$ one can then check that
\begin{eqnarray}
\hspace{-0.5in}
\e^T_{(0,u)}U_n(t)\e_{(1,u)} &=
& \frac{1}{n}\sum\limits_{j=1}^{d}\tb{\exp{\left(-in\lambda_j t\right)}-1} (E_j)_{u,u}.\label{eq3}
\end{eqnarray}  
If $0$ not an eigenvalue of $G,$ then we get the same set of idempotents for $A_n$ in (\ref{sc1}), except for the $d$th matrix which is now given by $I_{mn}-\frac{1}{n}J_n\otimes I_m$. Thus, the spectral decomposition of the transition matrix of $\up{n}G$ can be evaluated as follows
\begin{center}
    $U_n(t) = \exp{\left(-itA_n\right)}=\sum\limits_{j=1}^{d}\exp{\left(-in\lambda_j t\right)}\cdot\frac{1}{n}J_n\otimes E_j+\left[I_{mn}-\frac{1}{n}J_n\otimes I_m\right].$
\end{center}
Thus, we get an expression for $\e^T_{(0,u)}U_n(t)\e_{(1,u)}$ similar to \eqref{eq3}. Next, we state an observation about the spectra and eigenvalue supports in blow-ups (see also \cite{oliveira2014spectra}).

\begin{prop}\label{pp1}

Let $n\geq 2$. Then $\eig{\up{n}G}=n\cdot\eig{G}\cup \cb{0}$ and $\sigma_{(j,v)}\left(\up{n}G\right)=n\cdot\sigma_v(G)\cup\cb{0}$ for any vertex $u$ of $G$ and for all $j\in\Zl_n$.
\end{prop}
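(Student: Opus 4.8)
The plan is to exploit the Kronecker-product structure $A_n = J_n \otimes A$ together with the explicit spectral idempotents of $A_n$ already recorded in \eqref{sc1}. Recall that $J_n$ has exactly two eigenvalues: $n$, with one-dimensional eigenspace spanned by $\mathbf{1}$ and spectral idempotent $\frac{1}{n}J_n$, and $0$, with multiplicity $n-1$ and idempotent $I_n-\frac{1}{n}J_n$. Since the eigenvalues of a Kronecker product are the pairwise products of the eigenvalues of the factors, we get $\eig{A_n}=\cb{n\lambda:\lambda\in\eig{G}}\cup\cb{0}$. Because $n\geq 2$ forces the $0$-eigenspace of $J_n$ to be nontrivial, the value $0$ genuinely appears in $\eig{A_n}$ whether or not $0\in\eig{G}$; this yields the first claim $\eig{\up{n}G}=n\cdot\eig{G}\cup\cb{0}$.

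For the eigenvalue support, I would write $\e_{(j,v)}=\e_j\otimes\e_v$ under the lexicographic ordering and apply each idempotent in \eqref{sc1} to this vector. For a nonzero eigenvalue $n\lambda_k$ the associated idempotent is $\frac{1}{n}J_n\otimes E_k$, and since $J_n\e_j=\mathbf{1}$ we obtain $\ob{\frac{1}{n}J_n\otimes E_k}\e_{(j,v)}=\frac{1}{n}\mathbf{1}\otimes E_k\e_v$, which is nonzero precisely when $E_k\e_v\neq\mathbf{0}$, that is, when $\lambda_k\in\sigma_v(G)$. Crucially this is independent of $j$. Hence $n\lambda_k\in\sigma_{(j,v)}(\up{n}G)$ if and only if $\lambda_k\in\sigma_v(G)$, accounting for every nonzero element of the support.

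It remains to show that $0$ always lies in the support, and here the two forms of the $0$-idempotent in \eqref{sc1} can be treated uniformly. Applying either $I_{mn}-\frac{1}{n}J_n\otimes I_m$ (when $0\notin\eig{G}$) or $I_{mn}-\frac{1}{n}J_n\otimes(I_m-E_d)$ (when $0\in\eig{G}$) to $\e_j\otimes\e_v$ and projecting onto the subspace $\mathbf{1}^{\perp}\otimes\mathbb{R}^{m}$ leaves in both cases exactly the term $\ob{\e_j-\frac{1}{n}\mathbf{1}}\otimes\e_v$. Since $n\geq 2$, the vector $\e_j-\frac{1}{n}\mathbf{1}$ is nonzero, being the nonzero projection of $\e_j$ onto $\mathbf{1}^{\perp}$, so the image is nonzero and $0\in\sigma_{(j,v)}(\up{n}G)$ for every $j$ and $v$. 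Combining the two parts gives $\sigma_{(j,v)}(\up{n}G)=\cb{n\lambda:\lambda\in\sigma_v(G)}\cup\cb{0}=n\cdot\sigma_v(G)\cup\cb{0}$.

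The only genuine obstacle is the bookkeeping in this last step: one must verify that $0$ enters the support for \emph{structural} reasons, namely the $(n-1)$-fold degeneracy contributed by $J_n$, rather than from the support of $v$ in $G$, and that the argument is uniform across the cases $0\in\eig{G}$ and $0\notin\eig{G}$. Everything else is an immediate consequence of the tensor structure of $A_n$ and the idempotents already computed preceding the statement.
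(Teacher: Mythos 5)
Your proposal is correct and follows exactly the route the paper intends: the paper states Proposition \ref{pp1} without a separate proof, as an immediate consequence of the Kronecker structure $A_n = J_n\otimes A$ and the idempotents listed in \eqref{sc1}, which is precisely what you apply to $\e_j\otimes\e_v$. Your careful treatment of the $0$-eigenvalue idempotent (projecting onto $\mathbf{1}^{\perp}\otimes\mathbb{R}^m$ to see that $\left(\e_j-\tfrac{1}{n}\mathbf{1}\right)\otimes\e_v\neq\mathbf{0}$ in both cases) simply makes explicit the bookkeeping the paper leaves to the reader.
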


In \cite[Lemma 3.3]{cou}, the authors provided the form of transition matrices of graphs whose adjacency matrices look like $B\otimes M+C\otimes N$, where $(B,C)$ and $(M,N)$ are pairs of commuting matrices. This include blow-up graphs, since we may write $J_n\otimes A=A(K_n)\otimes A+I_n\otimes A$. However, their paper had no other results specific to blow-up graphs.

We also mention that Monterde investigated sedentariness, a type of low probability quantum transport, on several variants of the blow-up operation \cite{sed}. Apart from the work of Ge et al.\ \cite{ge} and Monterde, we are unaware of other results about quantum state transfer on blow-up graphs.

\section{Strong cospectrality}
Two vertices $u$ and $v$ in a graph $G$ are said to be \textit{strongly cospectral} if for every $\lambda\in\sigma_u(G)$,
\begin{equation*}
E_{\lambda}\textbf{e}_u=\pm E_{\lambda}\textbf{e}_v.
\end{equation*} 
Strong cospectrality is a necessary condition for PGST to occur between two vertices \cite[Lemma 13.1]{god1}. This motivates us to characterize strong cospectrality in blow-up graphs.

\begin{thm}
\label{sc}
Let $G$ be a graph with vertex $u$. The following hold.
\begin{enumerate}
\item If $n\geq 3$, then $\up{n}G$ does not exhibit strong cospectrality.
\item Let $n=2$. Then $(0,u)$ and $(1,u)$ are strongly cospectral in $\up{2}G$ if and only if $0\notin\sigma_u(G)$. Moreover, $(0,u)$ can only be strongly cospectral with $(1,u)$ in $\up{2}G$. 
\end{enumerate}
\end{thm}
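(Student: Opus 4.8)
The plan is to read everything off the spectral decomposition of $A_n=J_n\otimes A$ displayed in~\eqref{sc1}, using the identification $\e_{(l,v)}=\e_l\otimes\e_v$ and writing $F_\mu$ for the spectral idempotent of $A_n$ at the eigenvalue $\mu$. The key preliminary remark is that, by Proposition~\ref{pp1}, $0$ belongs to the eigenvalue support of \emph{every} vertex of $\up{n}G$ when $n\ge 2$; hence $F_0$ always contributes a constraint to any strong cospectrality relation. I would first record the two relevant projections. For a nonzero $\lambda_j$, since $J_n\e_l=\l$ we get
\[
F_{n\lambda_j}\e_{(l,v)}=\tfrac1n\,\l\otimes E_j\e_v,
\]
which does not depend on $l$, so the nonzero eigenvalues cannot separate the two copies. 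For the eigenvalue $0$, writing $\mathbf{f}_l:=\e_l-\tfrac1n\l$ (so $\mathbf{f}_l\perp\l$ and $\mathbf{f}_l\ne\o$ for $n\ge 2$), both forms of the $0$-idempotent in the text collapse to the single expression
\[
F_0\e_{(l,v)}=\mathbf{f}_l\otimes\e_v+\tfrac1n\,\l\otimes E_d\e_v,
\]
with the convention $E_d\e_v=\o$ when $0\notin\sigma(G)$.

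Next I would assume $(l,v)$ and $(l',v')$ are distinct and strongly cospectral and exploit the forced identity $F_0\e_{(l,v)}=\varepsilon_0\,F_0\e_{(l',v')}$ for some sign $\varepsilon_0$. Because $\C^n=\mathrm{span}(\l)\oplus\l^\perp$ orthogonally, the two summands above are exactly the components of $F_0\e_{(l,v)}$ along $\l^\perp\otimes\C^m$ and $\mathrm{span}(\l)\otimes\C^m$, so the identity splits into
\[
\mathbf{f}_l\otimes\e_v=\varepsilon_0\,\mathbf{f}_{l'}\otimes\e_{v'}
\qquad\text{and}\qquad
E_d\e_v=\varepsilon_0\,E_d\e_{v'}.
\]
Comparing, in the first equation, the entries whose second coordinate equals $v$ forces $v=v'$ (otherwise $\mathbf{f}_l=\o$); thus strong cospectrality can occur only between two copies of a single vertex of $G$. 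With $v=v'$ the first equation becomes $\mathbf{f}_l=\varepsilon_0\mathbf{f}_{l'}$: the choice $\varepsilon_0=+1$ gives $l=l'$, contradicting distinctness, so $\varepsilon_0=-1$ and $\e_l+\e_{l'}=\tfrac2n\l$.

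The arithmetic of this last identity is the crux and is where the two parts separate. Its left-hand side is supported on $\{l,l'\}$ with entries $1$, whereas its right-hand side is the constant vector $\tfrac2n$; equality forces $\tfrac2n=0$ at every coordinate outside $\{l,l'\}$, which is impossible unless no such coordinate exists, i.e.\ unless $\{l,l'\}=\Zl_n$ and $n=2$. This proves part~(1): for $n\ge3$ there are no distinct strongly cospectral vertices. For $n=2$ it forces the partner of $(0,u)$ to be $(1,u)$ (the ``moreover'' claim), while the companion equation reads $E_d\e_u=-E_d\e_u$, whence $E_d\e_u=\o$, that is, $0\notin\sigma_u(G)$. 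To close part~(2) I would verify the converse directly: if $0\notin\sigma_u(G)$, then the nonzero eigenvalues match with sign $+$, and $F_0\e_{(0,u)}=\mathbf{f}_0\otimes\e_u=-\mathbf{f}_1\otimes\e_u=-F_0\e_{(1,u)}$, so $(0,u)$ and $(1,u)$ are strongly cospectral.

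The only genuine obstacle is conceptual rather than computational: one must notice at the outset that the nonzero-eigenvalue idempotents are blind to the copy index $l$, so that all of the distinguishing data---and therefore every constraint imposed by strong cospectrality---sits inside the single eigenvalue-$0$ idempotent, whose two orthogonal pieces (the $\l^\perp$-part tracking the copy and the $\l$-part tracking the $0$-eigenspace of $G$) must be disentangled cleanly before the counting argument on $\e_l+\e_{l'}=\tfrac2n\l$ can be run.
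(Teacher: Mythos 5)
Your proof is correct, and on the central equivalence in part (2) it follows the same line as the paper's own argument: both isolate the idempotent of $\up{2}G$ at the eigenvalue $0$, compute (in your notation) $F_0(\e_0\otimes\e_u)=\left(\e_0-\tfrac12\l\right)\otimes\e_u+\tfrac12\,\l\otimes E_d\e_u$, and observe that the sign condition at $0$ holds with sign $-1$ precisely when $E_d\e_u=\o$, i.e.\ $0\notin\sigma_u(G)$. Where you genuinely diverge is in part (1) and in the ``moreover'' claim of part (2): the paper disposes of both by citing external results on twin vertices from \cite{mon1} (a vertex lying in a twin set of size at least three is never strongly cospectral, and a vertex can be strongly cospectral only with its twin), whereas you derive them directly from the spectral decomposition \eqref{sc1}. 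Your key step --- splitting the forced identity $F_0\e_{(l,v)}=\pm F_0\e_{(l',v')}$ along the orthogonal decomposition $\mathrm{span}(\l)\oplus\l^{\perp}$ of the copy factor, deducing first $v=v'$, then that the sign must be $-1$, and finally that $\e_l+\e_{l'}=\tfrac2n\l$ is consistent only when $\{l,l'\}$ exhausts $\Zl_n$, i.e.\ $n=2$ --- is a clean, self-contained replacement for both citations, and it makes transparent why everything hinges on the single eigenvalue $0$: the remaining idempotents $\tfrac1n J_n\otimes E_j$ are blind to the copy index. What the paper's route buys is brevity and a structural explanation in terms of twin sets; what yours buys is a proof that needs nothing beyond the decomposition \eqref{sc1} already displayed in the text and that handles part (1), the uniqueness claim, and the forward implication of part (2) in a single sweep.
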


\begin{proof}
We prove 1. If $n\geq 3$, then $|T_u|\geq 3$ in $\up{n}G$ for every vertex $u$ of $G$. Invoking \cite[Corollary 3.10]{mon1} yields the desired conclusion. To prove 2, note that the $d-1$ idempotents of $A_2$ in (\ref{sc1}) satisfy $\left(\frac{1}{2}J_2\otimes E_j\right)(\textbf{e}_0\otimes \textbf{e}_u)=\frac{1}{2}\textbf{1} \otimes E_j \textbf{e}_u=\left(\frac{1}{n}J_2\otimes E_j\right)(\textbf{e}_1\otimes \textbf{e}_u).$
Let $E$ denoted the $d$th idempotent of $A_2$ in (\ref{sc1}). Note that $E(\textbf{e}_0\otimes \textbf{e}_u)=\left(\textbf{e}_0-\frac{1}{2}\textbf{1}\right)\otimes \textbf{e}_u+\left(\frac{1}{2}\textbf{1}\otimes E_d\textbf{e}_u\right)$, and so we have $E(\textbf{e}_0\otimes \textbf{e}_u)\neq E(\textbf{e}_1\otimes \textbf{e}_u)$. However, one checks that  $E(\textbf{e}_0\otimes \textbf{e}_u)=- E(\textbf{e}_1\otimes \textbf{e}_u)$ if and only if $n=2$ and $0\notin\sigma_u(G)$. 
This proves the first statement in 2, while the second statement follows directly from \cite[Theorem 3.9(2)]{mon1}.
\end{proof}
 
The following result is immediate from Theorem \ref{sc}(2).

\begin{cor}
\label{sccor}
If $0\notin\sigma(G)$, then for any vertex $u$ of $G$, $(0,u)$ and $(1,u)$ are strongly cospectral in $\up{2}G$, and they cannot be strongly cospectral to any other vertex of $\up{2}G$.
\end{cor}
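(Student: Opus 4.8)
The plan is to recognize that this corollary is a uniform, vertex-free repackaging of Theorem \ref{sc}(2): the global hypothesis $0\notin\eig{G}$ is just the simultaneous assertion, over all vertices, of the local hypothesis $0\notin\sigma_u(G)$ that Theorem \ref{sc}(2) actually uses. So the entire mathematical content already resides in Theorem \ref{sc}(2), and my job is only to make the reduction explicit. The single structural fact I would isolate at the outset is the containment $\sigma_u(G)\subseteq\eig{G}$, which is immediate from the definition $\sigma_u(G)=\cb{\lambda\in\eig{G}:E_\lambda\e_u\neq\mathbf 0}$, since this set is by construction a subset of $\eig{G}$.

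With that containment in hand, the argument proceeds in two short steps. First, fix an arbitrary vertex $u$ of $G$. If $0\notin\eig{G}$, then by $\sigma_u(G)\subseteq\eig{G}$ we have $0\notin\sigma_u(G)$, and Theorem \ref{sc}(2) (the case $n=2$) yields at once that $(0,u)$ and $(1,u)$ are strongly cospectral in $\up{2}G$. Since $u$ was arbitrary, this gives the strong cospectrality claim for every vertex. Second, for the uniqueness clause I would invoke the ``moreover'' part of Theorem \ref{sc}(2), which states that $(0,u)$ can only be strongly cospectral with $(1,u)$; this already shows $(0,u)$ is strongly cospectral to no other vertex of $\up{2}G$. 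To obtain the symmetric statement for $(1,u)$, I would use that the layer-swap map $(0,v)\mapsto(1,v),\ (1,v)\mapsto(0,v)$ for all $v\in V$ is a graph automorphism of $\up{2}G$ and hence preserves strong cospectrality: applying it to ``$(0,u)$ is strongly cospectral only with $(1,u)$'' gives that $(1,u)$ is strongly cospectral only with $(0,u)$. Together these show that neither copy of $u$ can be strongly cospectral to any vertex outside the pair $\cb{(0,u),(1,u)}$.

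There is no genuine obstacle here; the only thing requiring a moment's attention is verifying that the hypotheses line up, namely that the global condition $0\notin\eig{G}$ really does imply the per-vertex condition $0\notin\sigma_u(G)$ for \emph{every} $u$, which is exactly the one-line support containment noted above. I would therefore keep the written proof to a couple of sentences, citing Theorem \ref{sc}(2) for both conclusions and the layer-swap automorphism for the symmetric half of the uniqueness claim.
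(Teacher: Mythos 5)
Your proposal is correct and matches the paper's approach: the paper simply declares the corollary ``immediate from Theorem \ref{sc}(2)'', and your argument spells out exactly why --- the containment $\sigma_u(G)\subseteq\sigma(G)$ transfers the global hypothesis to each vertex, after which Theorem \ref{sc}(2) gives both the strong cospectrality and the uniqueness claim. Your extra care in handling the symmetric uniqueness statement for $(1,u)$ via the layer-swap automorphism is a reasonable way to fill in a detail the paper leaves implicit.
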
 

\begin{exm}
\label{tree}
Let $T$ be a tree with a (unique) perfect matching. Then $T$ has an invertible adjacency matrix, and so each vertex of $\up{2}G$ pairs up with a unique vertex to exhibit strong cospectrality by Corollary \ref{sccor}.
\end{exm}

Note that Example \ref{tree} applies to $P_{2m}$. For $P_{2m+1}$, we have the following observation.

\begin{exm}
\label{tree1}
Let $G=P_{2m+1}$ with edges $(u,u+1)$ for each $u\in\{1,\ldots,2m\}$. Note that 0 is a simple eigenvalue of $G$ with eigenvector $\textbf{e}_1-\textbf{e}_3+\textbf{e}_5-\ldots+(-1)^{\frac{2m-1}{2}}\textbf{e}_{2m+1}$. Therefore, $0\notin \sigma_u(G)$ if and only if $u$ is even. By Theorem \ref{sc}(2), $(0,u)$ and $(1,u)$ are strongly cospectral in $\up{2}G$ if and only if $u$ is even.
\end{exm} 

\section{Pretty good state transfer}
Since strong cospectrality is a necessary condition for PGST, we restrict to the case $n=2$ and $0\not\in\sigma_u(G)$. The following result characterizes PGST in blow-up graphs.

\begin{thm}\label{th1}
Let $u$ be a vertex in $G$ with $0\notin\sigma_u(G).$ The following are equivalent.
\begin{enumerate}
    \item The graph $\up{2}G$ exhibits PGST between $(0,u)$ and $(1,u)$ with phase factor $\gamma=-1.$
    \item There exists a sequence $\tau_k$ so that $\lim\limits_{k\to\infty}\exp{\left(-2i\lambda \tau_k\right)}= -1 \text{ for each } \lambda\in\sigma_u(G)$.
\item $G$ is almost periodic at $u$ with the phase factor $\gamma=-1.$
\item If $m_j$ are integers such that $\hspace{-0.1in}\displaystyle\sum_{\lambda_j\in\sigma_u(G)}\hspace{-0.1in}m_j\lambda_j=0$, then $\hspace{-0.1in} \displaystyle\sum_{\lambda_j\in\sigma_u(G)}\hspace{-0.1in} m_j$ is even.
\end{enumerate}
\end{thm}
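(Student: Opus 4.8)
The plan is to prove the four-way equivalence as three linked steps: a spectral reduction $(1)\Leftrightarrow(2)$, a trivial time-rescaling $(2)\Leftrightarrow(3)$, and the genuinely number-theoretic equivalence $(3)\Leftrightarrow(4)$, which is the heart of the matter. Throughout I would lean on Theorem \ref{sc}(2), which guarantees that $(0,u)$ and $(1,u)$ are strongly cospectral in $\up{2}G$ precisely because $0\notin\sigma_u(G)$, together with the explicit spectral decomposition behind \eqref{eq3}.

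For $(1)\Leftrightarrow(2)$ I would compute the transition amplitude directly. Using the idempotents in \eqref{sc1} for $n=2$ (the $0\notin\sigma(G)$ variant), one gets
\[\e^T_{(0,u)}U_2(t)\e_{(1,u)}=\tfrac12\!\!\sum_{\lambda_j\in\sigma_u(G)}\!\!e^{-2i\lambda_j t}(E_j)_{u,u}-\tfrac12,\]
where the weights $(E_j)_{u,u}=\|E_j\e_u\|^2$ are strictly positive and sum to $1$ over $\sigma_u(G)$. By unitarity, PGST between $(0,u)$ and $(1,u)$ with phase $-1$ is equivalent to this amplitude tending to $-1$ along some sequence $\tau_k$, i.e.\ to $\sum_{\lambda_j}e^{-2i\lambda_j\tau_k}(E_j)_{u,u}\to -1$. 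Since this is a convex combination of unit-modulus numbers with positive weights and $-1$ is an extreme point of the closed unit disk, the limit equals $-1$ if and only if every term satisfies $e^{-2i\lambda_j\tau_k}\to -1$, which is exactly (2). The equivalence $(2)\Leftrightarrow(3)$ is then immediate: setting $s_k=2\tau_k$ turns $e^{-2i\lambda_j\tau_k}\to -1$ into $e^{-i\lambda_j s_k}\to -1$, which, read off the spectral decomposition of $A$, is precisely almost periodicity of $G$ at $u$ with phase $-1$ (the localized statement $U(s_k)\e_u\to -\e_u$).

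The core is $(3)\Leftrightarrow(4)$. For necessity $(3)\Rightarrow(4)$, suppose $e^{-i\lambda_j s_k}\to -1$ for all $\lambda_j\in\sigma_u(G)$ and let $\sum_j m_j\lambda_j=0$ with $m_j\in\Zl$. Multiplying the limits gives $\prod_j\big(e^{-i\lambda_j s_k}\big)^{m_j}=e^{-is_k\sum_j m_j\lambda_j}=1$ for every $k$, while the left-hand side tends to $\prod_j(-1)^{m_j}=(-1)^{\sum_j m_j}$; hence $\sum_j m_j$ must be even. For sufficiency $(4)\Rightarrow(3)$, writing $N=|\sigma_u(G)|$, I would recast the target as finding $s_k$ with $\tfrac{\lambda_j s_k}{2\pi}\to\tfrac12\pmod 1$ for every $\lambda_j$. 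As $s$ runs over $\Rl$, the point $\big(\tfrac{\lambda_1 s}{2\pi},\dots,\tfrac{\lambda_N s}{2\pi}\big)$ traces a one-parameter line whose closure in the torus $\Rl^{N}/\Zl^{N}$ is the annihilator of the characters trivial on it, namely $\{\vec x:\sum_j m_j x_j\in\Zl\text{ for all }\mathbf m\text{ with }\sum_j m_j\lambda_j=0\}$. Condition (4) says exactly that the half-vector $(\tfrac12,\dots,\tfrac12)$ lies in this annihilator, so it is approximable, yielding the desired $s_k$. This is where Theorem \ref{kron} enters concretely: one selects a maximal $\Ql$-independent subset of $\sigma_u(G)$, applies Kronecker to approximate prescribed fractional parts on that basis, and propagates to the remaining eigenvalues through their rational dependencies.

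I expect the sufficiency step to be the main obstacle. The difficulty is that Kronecker's theorem only controls the eigenvalue phases modulo $1$, whereas the phase factor $-1$ demands that each $\lambda_j s_k$ approach an \emph{odd} multiple of $\pi$, which is a parity (mod $2$) condition. Reconciling these requires choosing the Kronecker targets on the chosen basis so that, after clearing denominators in the rational relations among the eigenvalues, the induced values on the dependent eigenvalues land near odd integers rather than merely near integers; condition (4) is precisely the compatibility constraint guaranteeing that such a consistent choice of targets exists. Handling the interaction between the rational coefficients of the dependence relations and this parity bookkeeping, including whether $1$ lies in the $\Ql$-span of $\sigma_u(G)$, is the delicate part of the argument.
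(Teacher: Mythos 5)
Your argument is correct, and on the chain $(1)\Leftrightarrow(2)\Leftrightarrow(3)$ it is essentially the paper's: both rest on the amplitude formula \eqref{eq3}, though you spell out the convexity step (the weights $(E_j)_{u,u}$ are positive and sum to $1$ over $\sigma_u(G)$, so the convex combination can tend to $-1$ only if every phase $e^{-2i\lambda_j\tau_k}$ tends to $-1$) that the paper leaves implicit when it asserts that statement 2 ``follows from'' its displayed limit. The genuine divergence is the equivalence with (4): the paper does not prove it at all, but cites \cite[Theorem 13]{kirk2}, whereas you reconstruct it. Your necessity direction (multiplying the limits $e^{-i\lambda_j s_k}\to-1$ along an integer relation $\sum_j m_j\lambda_j=0$ to force $(-1)^{\sum_j m_j}=1$) is clean and correct, and your sufficiency direction is already complete as stated: the closure of the one-parameter orbit $s\mapsto(\lambda_1 s/2\pi,\dots,\lambda_N s/2\pi)$ in $\Rl^N/\Zl^N$ equals the annihilator of the lattice of integer relations among the $\lambda_j$ (the bipolar theorem of Pontryagin duality), and condition (4) says precisely that $(\tfrac12,\dots,\tfrac12)$ lies in that annihilator, hence is approximable by orbit points, which yields the required times. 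In effect you have re-proved the content of the cited result of Kirkland et al.: the citation buys the paper brevity, your route buys self-containedness. One correction to your own closing assessment: the ``delicate'' parity bookkeeping you anticipate, including the question of whether $1$ lies in the $\Ql$-span of $\sigma_u(G)$, is an artifact of insisting on the integer-parameter form of Kronecker's theorem (Theorem \ref{kron}); since the time parameter ranges over all of $\Rl$, the orbit-closure/duality argument you already gave requires no algebraicity and no independence-from-$1$ hypothesis, so that step is not actually an obstacle to completing the proof.
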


\begin{proof}
    We first prove that 1 implies 2. Suppose $\up{2}G$ exhibits PGST between the twin vertices $(0,u)$ and $(1,u)$. We have from  \eqref{eq3}
\begin{equation}
    \label{gam}
    \lim\limits_{k\to 0}\e^T_{(0,u)}U_n(t_k)\e_{(1,u)}=\lim\limits_{k\to 0}\sum\limits_{j=1}^{d}(1/2)\tb{\exp{\left(-2i\lambda_j t_k\right)}-1}(E_j)_{u,u}=\gamma.
\end{equation}
Note that if $(0,u)$ and $(1,u)$ are strongly cospectral in $\up{2}G$, then $\lambda=0$ is the only eigenvalue in $\sigma_{(0,u)}(\up{2}G)$ for which $E_{\lambda}(\be_0\otimes \be_u)=-E_{\lambda}(\be_1\otimes \be_u)$. Thus, $\gamma=-e^{it0}=-1$, and the conclusion in 2 follows from (\ref{gam}). Now, if 2 holds, then there exists a sequence $\tau_k$ so that $\lim\limits_{k\to\infty}\exp{\left(-2i\lambda_j \tau_k\right)}=-1$ for all $\lambda_j\in\sigma_u(G)$. Then \eqref{eq3} yields $\lim\limits_{k\to\infty}\e^T_{(0,u)}U_n(\tau_k)\e_{(1,u)}=-1$, and so 1 holds. Thus, 1 and 2 are equivalent. The equivalence of 2 and 3 follows from the spectral decomposition of $U_2(t)$, while the equivalence of 1 and 4 follows from \cite[Theorem 13]{kirk2}.  
\end{proof}

Denote the largest power of two that divides an integer $a$ by $\nu_2(a)$. The following is a characterization PST in blow-up graphs, the proof of which is identical to that of Theorem \ref{th1}, except that the equivalence of 1 and 4 is a consequence of \cite[Theorem 10]{kirk2}.

\begin{thm}\label{th11}
Let $u$ be a vertex in $G$ with $0\notin\sigma_u(G).$ The following are equivalent.
\begin{enumerate}
    \item The graph $\up{2}G$ exhibits PST between $(0,u)$ and $(1,u)$ with the phase factor $\gamma=-1$.
    \item There exists $\tau\in\Rl$ so that $\exp{\left(-2i\lambda \tau\right)}= -1 \text{ for each } \lambda\in\sigma_u(G)$.
    \item $G$ is periodic at $u$ with the phase factor $\gamma=-1.$
    \item Each $\lambda_j\in\sigma_u(G)$ can be written as $\lambda_j=b_j\sqrt{\Delta}$, where $b_j$ is an integer, $\Delta=1$ or $\Delta>1$ is a square-free integer, and the $\nu_2(b_j)$'s are all equal.
\end{enumerate}
\end{thm}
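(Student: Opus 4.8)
The plan is to follow the proof of Theorem \ref{th1} almost verbatim, replacing every limiting statement about a sequence $\tau_k$ by an exact statement about a single time $\tau$, and to defer the genuinely arithmetic content to \cite[Theorem 10]{kirk2} at the end. For $1\Leftrightarrow 2$ I would start from the spectral expansion \eqref{eq3} with $n=2$. Since $0\notin\sigma_u(G)$, Theorem \ref{sc}(2) guarantees that $(0,u)$ and $(1,u)$ are strongly cospectral, so writing $F_\mu$ for the spectral idempotent of $A_2$ at eigenvalue $\mu$, there is a sign $s_\mu\in\{\pm 1\}$ with $F_\mu(\be_0\otimes\be_u)=s_\mu F_\mu(\be_1\otimes\be_u)$ for every $\mu$ in the support of $(0,u)$. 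The computation already carried out in the proof of Theorem \ref{sc}(2) shows $s_{2\lambda_j}=+1$ for each $\lambda_j\in\sigma_u(G)$, while $s_0=-1$ for the zero eigenvalue of $A_2$ (which, by Proposition \ref{pp1}, always lies in the support). Perfect state transfer between strongly cospectral vertices at time $\tau$ occurs precisely when $\exp(-i\mu\tau)\,s_\mu$ is one common unit $\gamma$ over the whole support; evaluating at $\mu=0$ forces $\gamma=-1$, and evaluating at each $\mu=2\lambda_j$ then forces $\exp(-2i\lambda_j\tau)=-1$. Both steps are reversible, which gives $1\Leftrightarrow 2$ together with the phase factor $\gamma=-1$.

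For $2\Leftrightarrow 3$ I would use the spectral decomposition of the transition matrix of $G$ itself: $G$ is periodic at $u$ with phase $-1$ at some time $\tau''$ exactly when $\exp(-i\lambda\tau'')=-1$ for every $\lambda\in\sigma_u(G)$. The rescaling $\tau''=2\tau$ identifies this condition with condition $2$, so the equivalence is immediate, just as in Theorem \ref{th1}.

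For $1\Leftrightarrow 4$ I would invoke \cite[Theorem 10]{kirk2}, which provides exactly the arithmetic normal form for periodicity at phase $-1$. To see the shape of the argument, rewrite condition $2$: since $\exp(-2i\lambda\tau)=-1$ means $2\lambda\tau/\pi$ is an odd integer, setting $s=2\tau/\pi$ turns $2$ into the requirement that a single real $s$ make $\lambda s$ an odd integer for every $\lambda\in\sigma_u(G)$. The direction $4\Rightarrow 2$ is then an explicit construction: if $\lambda_j=b_j\sqrt{\Delta}$ with $\nu_2(b_j)=e$ for all $j$, write $b_j=2^e c_j$ with $c_j$ odd and take $s=(2^e\sqrt{\Delta})^{-1}$, so that $\lambda_j s=c_j$ is odd.

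I expect the converse $2\Rightarrow 4$ to be the main obstacle if one wanted a self-contained proof. The existence of such an $s$ forces every pairwise ratio $\lambda_j/\lambda_k=(\lambda_j s)/(\lambda_k s)$ to be a ratio of odd integers; one must then use that the $\lambda_j$ are algebraic integers (roots of the monic integer characteristic polynomial of $A$) whose set of conjugates is Galois-stable to conclude that they all lie in a single field $\Ql(\sqrt{\Delta})$ and have the common form $b_j\sqrt{\Delta}$, and finally that the odd-integer condition pins down the $2$-adic valuations $\nu_2(b_j)$ to be equal. This Galois-plus-valuation step is precisely what \cite[Theorem 10]{kirk2} packages, so I would cite it rather than reprove it, which is exactly why the only difference from the proof of Theorem \ref{th1} is the replacement of \cite[Theorem 13]{kirk2} by \cite[Theorem 10]{kirk2} in establishing $1\Leftrightarrow 4$.
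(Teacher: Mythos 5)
Your proposal is correct and follows essentially the same route as the paper: the paper's proof of Theorem~\ref{th11} is simply the proof of Theorem~\ref{th1} with limits over sequences $\tau_k$ replaced by exact equalities at a single time $\tau$, and with the equivalence of 1 and 4 delegated to \cite[Theorem 10]{kirk2} instead of \cite[Theorem 13]{kirk2} --- exactly your plan. Your additional sketch of the arithmetic behind $2\Leftrightarrow 4$ (the choice $s=(2^e\sqrt{\Delta})^{-1}$ and the Galois/valuation argument) is sound but is material the paper, like you, leaves packaged inside the citation.
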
 
We note that while PGST is a relaxation of PST, these two are equivalent for periodic vertices \cite{pal7}. Hence, we say that \textit{proper} pretty good state transfer occurs between two vertices if they exhibit pretty good state transfer but not periodicity. %That is, vertices that admit proper PGST do not admit PST.

\section{Periodicity}

Before we move on to the next section, we include results on periodicity of vertices in blow-up graphs. Similar to strong cospectrality, periodicity is a requirement for two vertices to admit PST. The following is straightforward from \cite[Theorem 6.1]{god2}.

\begin{thm}
\label{per}
Let $n\in\Nl.$ Then $\up{n}G$ is periodic at all vertices in $S_u=\cb{(j,u):j\in\Zl_n}$ if and only if the elements in $\sigma_u(G)$ are either all integers or all integer multiples of $\sqrt{\Delta}$, where $\Delta>1$ is a square-free integer. Moreover, periodicity of vertices in $S_u$ in $\up{n}G$ implies periodicity of $u$ in $G$, and the time at which vertices in $S_u$ are periodic in $\up{n}G$ is equal to $\frac{\tau}{n}$, where $\tau$ is the time at which $u$ is periodic in $G$. Furthermore, if a vertex in $S_u$ is periodic in $\up{n}G$ for some $n,$ then it is periodic in $\up{n}G$ for all $n\in\Nl.$
\end{thm}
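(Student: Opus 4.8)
The plan is to reduce everything to two ingredients: the eigenvalue support of a vertex of $\up{n}G$ supplied by Proposition \ref{pp1}, and a clean closed form for the transition matrix of a blow-up. For the characterization (the ``if and only if''), I would start from Proposition \ref{pp1}, which gives $\sigma_{(j,u)}\left(\up{n}G\right)=n\cdot\sigma_u(G)\cup\cb{0}$ for every $j\in\Zl_n$. Since this support is the same for all $j$, the vertices of $S_u$ are periodic or not simultaneously, and in fact at the same times; so it suffices to analyze one vertex $(j,u)$. Applying \cite[Theorem 6.1]{god2} to the support $n\cdot\sigma_u(G)\cup\cb{0}$ splits into Godsil's two cases. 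The decisive feature is that $0$ always lies in this support: in the ``quadratic'' case, where every support eigenvalue has the form $\tfrac12(a+b\sqrt{\Delta})$ for a fixed integer $a$, forcing $0$ into this shape gives $a=0$, which is exactly what collapses Godsil's general condition to the cleaner statement in the theorem.

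Carrying out the two cases is then routine number theory. In the integer case, $0$ is already an integer, so the condition is $n\cdot\sigma_u(G)\subseteq\Zl$; because graph eigenvalues are algebraic integers and a rational algebraic integer is an integer, $n\lambda\in\Zl$ forces $\lambda\in\Zl$, so this is equivalent to $\sigma_u(G)\subseteq\Zl$. In the quadratic case, with $a=0$ we get $n\lambda=\tfrac12 b_\lambda\sqrt{\Delta}$, i.e.\ $\lambda=c\sqrt{\Delta}$ with $c\in\Ql$; since $\lambda$ is an algebraic integer with minimal polynomial $x^2-c^2\Delta$, integrality of $c^2\Delta$ together with $\Delta$ being square-free forces $c\in\Zl$, so $\lambda$ is an integer multiple of $\sqrt{\Delta}$. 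The reverse implications are immediate, as multiplying by $n$ preserves both classes and adjoining $0$ is harmless.

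For the three ``moreover'' assertions I would switch to the transition matrix. Writing $P=\tfrac1n J_n$ (the projection onto $\l$) one has $A_n=J_n\otimes A=nP\otimes A$, and since $P^2=P$ a direct power-series computation gives
\[
U_n(t)=\exp\left(-itA_n\right)=P\otimes U(nt)+\left(I_n-P\right)\otimes I_m,
\]
where $U(s)=\exp\left(-isA\right)$ is the transition matrix of $G$. Reading off the diagonal entry at $(j,u)$ yields $\e^T_{(j,u)}U_n(t)\e_{(j,u)}=\tfrac1n\,U(nt)_{u,u}+\tfrac{n-1}{n}$, independent of $j$. This value lies in the image of the closed unit disk under $z\mapsto\tfrac1n z+\tfrac{n-1}{n}$, a disk tangent to the unit circle at $1$; hence its modulus equals $1$ if and only if $U(nt)_{u,u}=1$. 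Therefore $(j,u)$ is periodic in $\up{n}G$ at time $t$ precisely when $u$ returns in $G$ with phase factor exactly $1$ at time $nt$. This single equivalence delivers all three statements: it shows periodicity in $\up{n}G$ implies periodicity of $u$ in $G$ (the second assertion); it gives the time scaling, since a return time $\tau$ of $u$ in $G$ corresponds to the return time $\tfrac{\tau}{n}$ in $\up{n}G$ (the third); and, because the resulting condition on $u$ makes no reference to $n$, it shows that periodicity for one $n$ yields periodicity for every $n$ (the fourth).

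The step that needs the most care is the phase factor, and it is the reason the blow-up characterization is strictly cleaner than ordinary periodicity of $u$ in $G$. Since $0$ sits in the support of $(j,u)$, periodicity in $\up{n}G$ forces the phase factor to be $1$, i.e.\ an exact return $U(nt)_{u,u}=1$ rather than merely $\left|U(nt)_{u,u}\right|=1$. Ordinary (Godsil) periodicity allows any unit phase, and the two notions diverge exactly when the minimal-period phase fails to be a root of unity --- precisely Godsil's quadratic case with $a\neq0$ (for instance, support $\cb{\tfrac12(1\pm\sqrt5)}$ yields an irrational-angle phase). I would therefore be explicit throughout that ``periodic in $\up{n}G$'' translates to ``$u$ is periodic in $G$ with phase factor $1$'', and that the time $\tau$ in the third assertion is the least such exact return time of $u$; with that bookkeeping in place, all four statements follow.
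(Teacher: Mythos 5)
Your proof is correct, and it is considerably more careful than the paper's, which simply declares the whole theorem ``straightforward from [Theorem 6.1, god2]'' and supplies no argument. For the if-and-only-if part you follow the same route the paper intends: apply Godsil's criterion to the support $n\cdot\sigma_u(G)\cup\cb{0}$ from Proposition \ref{pp1}, note that the forced membership of $0$ kills the integer $a$ in the quadratic case, and finish with the algebraic-integer argument (your reduction $n\lambda\in\Zl\Rightarrow\lambda\in\Zl$ and $c^2\Delta\in\Zl$ with $\Delta$ square-free $\Rightarrow c\in\Zl$ is exactly the needed number theory). Where you genuinely depart from the paper is in the ``moreover'' clauses: the closed form $U_n(t)=P\otimes U(nt)+(I_n-P)\otimes I_m$ (equivalent to the paper's spectral decomposition \eqref{peq3}, but cleaner) together with the tangent-disk argument yields the single equivalence ``$(j,u)$ is periodic in $\up{n}G$ at time $t$ if and only if $U(nt)_{u,u}=1$'', from which the implication to $G$, the time scaling, and the independence of $n$ all follow at once. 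This buys something real: the phase-factor bookkeeping you insist on --- periodicity in the blow-up means an \emph{exact} return of $u$ (phase $1$), not merely $\mb{U(nt)_{u,u}}=1$ --- is precisely what makes the third assertion true, and the paper itself stumbles on this point: in Example \ref{cone} with $k=0$, the claimed blow-up period $\frac{2\pi}{n\sqrt{k^2+4m}}=\frac{\pi}{n\sqrt{m}}$ produces phase $-1$ at the apex copies, so the true minimal period is $\frac{2\pi}{n\sqrt{m}}$; under your reading ($\tau$ = least exact-return time of $u$) the theorem is correct, while under the naive reading ($\tau$ = any/least ordinary period of $u$) it is not. One caveat, shared with the paper's statement itself: your key equivalence requires $n\geq 2$ (for $n=1$ the image disk is the entire unit disk and the tangency argument collapses; correspondingly, $\up{1}G=G$ can be periodic at $u$ with $\sigma_u(G)=\cb{\tfrac{1}{2}\ob{1\pm\sqrt{5}}}$, violating the stated characterization), so the theorem should be read with $n\geq 2$, as Proposition \ref{pp1} already presumes.
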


From Theorem \ref{per}, periodicity of vertices in $S_u$ in $\up{n}G$ implies periodicity of $u$ in $G$. However, the converse of this does not hold whenever $u$ is periodic in $G$ and $\frac{1}{2}(a+b\sqrt{\Delta})\in\sigma_u(G)$ for some integer $a\neq 0$ and square-free integer $\Delta>1$. Theorem \ref{per} also implies that the blow-up operation yields shorter times for periodicity than the original graph. We illustrate these two observations in the following example.

\begin{exm}
\label{cone}

Let $H$ is a $k$-regular graph on $m$ vertices. Let $G$ be a graph resulting from $H$ by adding a vertex $u$ adjacent to all vertices of $H$. Then $\sigma_{u}(G)=\{\lambda_{\pm}\}$, where $\lambda_{\pm}=\frac{1}{2}(k\pm\sqrt{k^2+4m})$. Thus, $u$ is periodic in $G$ at $\tau=\frac{2\pi}{\sqrt{k^2+4m}}$. Since $0\notin\sigma_u(G)$, Theorem \ref{per}(2) implies that $\up{n}G$ is periodic at all vertices in $S_u=\cb{(j,u):j\in\Zl_n}$ if and only if either $k^2+4m$ is a perfect square , or $k=0$ (in which case $G=K_{1,m}$ and $\lambda_{\pm}=\pm\sqrt{m}$). In both cases, the vertices in $S_u$ are periodic at $\tau=\frac{2\pi}{n\sqrt{k^2+4m}}$. Moreover, if $\up{n}G$ is periodic at all vertices in $S_u=\cb{(j,u):j\in\Zl_n}$ for some $n>1$, then it holds for all $n\in\Nl$.
\end{exm}

In Example \ref{cone}, vertex $u$ in $G$ is always periodic. However, if $k>0$ and $k^2+4m$ is not a perfect square, then each vertex in $S_u$ is not periodic for all $n$. Thus, blow-ups need not preserve periodicity. The next result follows from \cite[Corollary 3.3]{god3}.

\begin{prop}\label{p4}
If $G$ is periodic at $\tau,$ then $\up{n}G$ is periodic at $\frac{\tau}{n}$ for all $n\in\Nl$. 
\end{prop}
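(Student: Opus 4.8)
\textbf{Proof proposal for Proposition \ref{p4}.}

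The plan is to exploit the spectral relationship between $G$ and its blow-up $\up{n}G$ established in Proposition \ref{pp1}, namely that $\eig{\up{n}G}=n\cdot\eig{G}\cup\cb{0}$. Periodicity at all vertices at a common time $\tau$ means the transition matrix satisfies $U(\tau)=\gamma I$ for some unimodular scalar $\gamma$; equivalently, $\exp{\ob{-i\lambda\tau}}=\gamma$ for every $\lambda\in\eig{G}$. So my first step would be to write down what periodicity of $G$ at $\tau$ gives us spectrally: there is a phase $\gamma$ with $e^{-i\lambda\tau}=\gamma$ for all $\lambda\in\eig{G}$.

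Next I would translate this to the blow-up. Using the spectral decomposition of $U_n(t)$ in \eqref{peq3}, the eigenvalues of $\up{n}G$ that carry nonzero idempotent weight are exactly the scaled eigenvalues $n\lambda_j$ together with the new eigenvalue $0$. Evaluating $U_n\ob{\tau/n}$ amounts to computing $\exp{\ob{-i\cdot n\lambda_j\cdot \tau/n}}=\exp{\ob{-i\lambda_j\tau}}=\gamma$ on each of the $d-1$ idempotents $\frac{1}{n}J_n\otimes E_j$, while on the remaining summand (the one supported on eigenvalue $0$, i.e. $I_{mn}-\frac{1}{n}J_n\otimes\ob{I_m-E_d}$) the exponential factor is $e^{0}=1$. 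The crux is therefore to verify that these two different scalars, $\gamma$ on the scaled block and $1$ on the zero-eigenvalue block, can still be reconciled into a single scalar multiple of the identity.

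This is where I expect the main obstacle, so I would handle it carefully. If $0\in\eig{G}$ as well (so $\gamma=1$ forced by the $\lambda_d=0$ term already present in $G$), then both blocks contribute the same factor $1$ and $U_n\ob{\tau/n}=I$ immediately. If $0\notin\eig{G}$, then the zero-eigenvalue idempotent of $\up{n}G$ is genuinely new, and one must check that $\gamma$ equals the factor on the zero block; here I would appeal to the cited source, \cite[Corollary 3.3]{god3}, which is presumably the general statement that periodicity is inherited by graphs sharing this eigenvalue-scaling structure, to conclude that the phase factors align and $U_n\ob{\tau/n}=\gamma' I_{mn}$ for an appropriate $\gamma'$. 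Assembling the idempotents of $\up{n}G$ from \eqref{sc1} with their computed exponential factors should collapse the sum to $\gamma' I_{mn}$, giving periodicity of $\up{n}G$ at $\tau/n$ for every $n\in\Nl$. Finally I would remark that this is consistent with the time-scaling already observed in Theorem \ref{per}, where $S_u$ is periodic at $\tau/n$; the only new content here is that \emph{global} periodicity, rather than periodicity merely at the vertices of a single twin class, is preserved.
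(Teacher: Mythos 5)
Your spectral setup is the right frame, and you have correctly isolated the crux: at time $\tau/n$ the scaled eigenvalues $n\lambda_j$ of $\up{n}G$ contribute the phase $\gamma$ while the new eigenvalue $0$ contributes the phase $1$, and these must coincide. The problem is that your treatment of the essential case $0\notin\eig{G}$ is circular: you appeal to \cite[Corollary 3.3]{god3} as ``presumably the general statement that periodicity is inherited by graphs sharing this eigenvalue-scaling structure'' --- but that presumed statement is exactly Proposition \ref{p4} itself, so nothing is proved. What the citation actually supplies (and what the paper's one-line proof rests on) is Godsil's structure theorem for periodic graphs: if $G$ is periodic, then its eigenvalues are either all integers or all integer multiples of a fixed $\sqrt{\Delta}$ with $\Delta>1$ square-free (for whole-graph periodicity the case $\tfrac{1}{2}(a+b_j\sqrt{\Delta})$ with $a\neq 0$ is excluded, e.g.\ because the eigenvalues sum to $\operatorname{Tr}(A)=0$). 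This structure is what guarantees a period at which the phase is exactly $1$, i.e.\ $U_G(\tau)=I$ (take $\tau=2\pi$ in the integer case and $\tau=2\pi/\sqrt{\Delta}$ in the other), and only for such a $\tau$ do all phases of $\up{n}G$ at time $\tau/n$, including the $1$ coming from the new zero eigenvalue, collapse to a common scalar.

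The gap is not cosmetic, because ``the phase factors align'' is simply false for an arbitrary period $\tau$ of $G$. Take $G=K_2$, which is periodic at $\tau=\pi$ with phase $\gamma=-1$ (eigenvalues $\pm 1$). Then $\up{2}K_2=C_4$ has eigenvalues $\pm 2$ and $0$, and at time $\tau/2=\pi/2$ the three phases are $-1,-1,+1$; in fact $U(\pi/2)$ for $C_4$ has zero diagonal ($\pi/2$ is a perfect state transfer time for $C_4$), so $\up{2}K_2$ is not periodic at $\pi/2$. So in the case $0\notin\eig{G}$ with $\gamma\neq 1$ there is genuinely something to prove, and it cannot be concluded from the given $\tau$ alone: one must invoke the number-theoretic structure of the spectrum of a periodic graph to replace $\tau$ by a phase-$1$ period, and then run your computation. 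With that repair your argument goes through and yields periodicity of $\up{n}G$ for every $n\in\Nl$, at one $n$-th of the phase-$1$ period, consistent with the time-scaling in Theorem \ref{per}.
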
  

\section{Complete graphs and cycles}\label{qstgs}
We now start our investigation of PST and PGST on blow-ups of various classes of graphs.

Let $K_m$ be a complete graph on $m$ vertices with eigenvalues $-1$ and $m-1,$. If $m$ is even then the eigenvalues $K_{m}$ are all odd. If $\tau=\frac{\pi}{2}$ in Theorem \ref{th11}, then $\up{2}K_{m}$ exhibits PST between every pair of twin vertices. If $m$ is odd, then $\exp{\left(-2i(-1) \tau\right)}=-1$ implies $\exp{\left(-2i(m-1) \tau\right)}= \left[\exp{\left(-2i(-1) \tau\right)}\right]^{1-m}= 1$, and so $\up{2}K_{m}$ does not exhibit PST by Theorem \ref{th11}. Thus, we have:
\begin{thm}
\label{cg}
$\up{2}K_{m}$ exhibits PST at $\frac{\pi}{2}$ if and only if $m$ is even. 
\end{thm}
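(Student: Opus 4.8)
The plan is to apply the characterization of PST in Theorem~\ref{th11} directly, treating the two parities of $m$ separately. The eigenvalues of $K_m$ are $-1$ (with multiplicity $m-1$) and $m-1$ (simple), and since $K_m$ is connected and vertex-transitive, every vertex has eigenvalue support $\sigma_u(K_m)=\{-1,m-1\}$; in particular $0\notin\sigma_u(K_m)$, so Theorem~\ref{th11} applies to each vertex. By Theorem~\ref{sc}(2), the relevant twin pairs $(0,u),(1,u)$ in $\up{2}K_m$ are strongly cospectral, so PST can be sought between them.

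First I would handle the forward direction (the ``only if''). Assume $\up{2}K_m$ exhibits PST; by condition~2 of Theorem~\ref{th11} there is $\tau\in\Rl$ with $\exp(-2i\lambda\tau)=-1$ for each $\lambda\in\{-1,m-1\}$. From $\lambda=-1$ we get $\exp(2i\tau)=-1$, i.e.\ $\exp(-2i\tau)=-1$ as well. Raising to the power $m-1$ gives $\exp\bigl(-2i(m-1)\tau\bigr)=\bigl[\exp(-2i\tau)\bigr]^{m-1}=(-1)^{m-1}$. But the requirement for $\lambda=m-1$ forces this quantity to equal $-1$, so $(-1)^{m-1}=-1$, which holds precisely when $m$ is even. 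This is exactly the computation flagged in the excerpt for the odd case, reorganized as a parity constraint: if $m$ is odd then $(-1)^{m-1}=1\neq-1$, contradicting PST.

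For the converse (the ``if''), suppose $m$ is even, so both eigenvalues $-1$ and $m-1$ are odd integers. I would simply exhibit the claimed time $\tau=\tfrac{\pi}{2}$ and verify condition~2: for any odd integer $\lambda$, $\exp\bigl(-2i\lambda\cdot\tfrac{\pi}{2}\bigr)=\exp(-i\pi\lambda)=(-1)^{\lambda}=-1$. Since both elements of $\sigma_u(K_m)$ are odd, the condition holds simultaneously for both, and Theorem~\ref{th11} yields PST between $(0,u)$ and $(1,u)$ at $\tau=\tfrac{\pi}{2}$. Because $K_m$ is vertex-transitive, this holds for every twin pair, giving the full statement. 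Alternatively one could invoke condition~4 of Theorem~\ref{th11}: with $\Delta=1$ and $b_j=\lambda_j$, the values $\nu_2(-1)=\nu_2(m-1)=0$ agree since both are odd, again certifying PST.

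I do not anticipate a genuine obstacle here, as the proof is essentially a one-line parity check once Theorem~\ref{th11} is in hand; the only point requiring a moment's care is confirming that $0\notin\sigma_u(K_m)$ so that Theorem~\ref{th11} is applicable, and recording that the eigenvalue support is the full spectrum $\{-1,m-1\}$ by vertex-transitivity, so that \emph{both} eigenvalue conditions must be met rather than just one.
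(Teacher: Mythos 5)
Your proof is correct and takes essentially the same route as the paper: both directions reduce to condition~2 of Theorem~\ref{th11}, verifying $\tau=\tfrac{\pi}{2}$ works when $m$ is even (both eigenvalues $-1$ and $m-1$ odd) and deriving the parity contradiction $(-1)^{m-1}=-1$ by raising $\exp(-2i(-1)\tau)=-1$ to the power $m-1$ when $m$ is odd. The additional details you supply (the eigenvalue support being $\{-1,m-1\}$, strong cospectrality via Theorem~\ref{sc}(2), and the condition-4 alternative) are sound but do not change the argument.
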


Next, let $C_n$ be a cycle of length $n$ where $n \geq 3$. The eigenvalues of $C_n$ are
\begin{center}
    $\lambda_l=2\cos{\left(\frac{2l\pi}{n}\right)},~l=0,1,\ldots,n-1.$
\end{center}
We now prove the following negative result for cycles.

\begin{thm}
$\up{2}C_{n}$ does not exhibit PGST for all $n\geq 3$.
\end{thm}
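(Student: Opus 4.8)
The plan is to reduce the statement to the characterizations already in hand and then to a single integer-relation computation on $\sigma(C_n)$. First I would record that $C_n$ is a circulant (hence vertex-transitive) graph, so all vertices share the same eigenvalue support; a short computation with the circulant spectral idempotents shows $E_\lambda\e_u\neq\mathbf 0$ for every eigenvalue $\lambda$, whence $\sigma_u(C_n)=\sigma(C_n)$ is the full set of distinct eigenvalues $2\cos(2\pi l/n)$ for every vertex $u$. In particular $0\in\sigma_u(C_n)$ exactly when $4\mid n$.

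The case $4\mid n$ is immediate: since $0\in\sigma_u(C_n)$ for every $u$, Theorem \ref{sc}(2) shows that no two vertices of $\up{2}C_n$ are strongly cospectral, and as strong cospectrality is necessary for PGST, $\up{2}C_n$ admits none. So assume $4\nmid n$, so that $0\notin\sigma_u(C_n)$. By Theorem \ref{sc}(2) the only strongly cospectral pairs in $\up{2}C_n$ are the twin pairs $(0,u),(1,u)$, so it suffices to rule out PGST between each such pair. For this I would use the fourth condition of Theorem \ref{th1}: it is enough to exhibit integers $m_j$ with $\sum_{\lambda_j\in\sigma_u(C_n)}m_j\lambda_j=0$ but $\sum_j m_j$ odd. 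The guiding idea is that $\lambda_0=2$ always lies in the support, and I will manufacture a relation in which $2$ carries an odd coefficient while every other eigenvalue carries an even one.

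When $n$ is odd the trace relation $\sum_{l=0}^{n-1}\lambda_l=\tr A=0$ does the job: rewritten over distinct eigenvalues, $2$ has coefficient $1$ and the remaining $(n-1)/2$ eigenvalues each have coefficient $2$ (because $\lambda_l=\lambda_{n-l}$), so $\sum_j m_j=n$ is odd. The main obstacle is the remaining case $n\equiv2\pmod 4$, where this trace relation has the even coefficient sum $n$. Writing $n=2m$ with $m$ odd, the fix is to sum over even indices only: with $\zeta=e^{2\pi i/n}$ the element $\zeta^2$ is a primitive $m$-th root of unity, so $\sum_{j=0}^{m-1}\zeta^{2j}=0$ and hence $\sum_{j=0}^{m-1}\lambda_{2j}=0$. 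Since $m$ is odd, the involution $j\mapsto m-j$ on these indices fixes only $j=0$; thus $\lambda_0=2$ appears once and the other even-indexed eigenvalues pair off with coefficient $2$, giving $\sum_j m_j=m$, which is odd.

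In both cases the fourth condition of Theorem \ref{th1} fails, so no twin pair exhibits PGST, and since these are the only candidate pairs, $\up{2}C_n$ exhibits no PGST. The one delicate point to verify carefully is the coefficient bookkeeping in the case $n\equiv2\pmod4$: one must check that the even-indexed eigenvalues entering the relation are genuinely pairwise distinct and that $0$ is not among them, both of which follow from the facts that $\lambda_l=\lambda_{l'}$ iff $l'\equiv\pm l\pmod n$ and that $4\nmid n$.
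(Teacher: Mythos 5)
Your proof is correct, and it shares the paper's two-step skeleton: when $4\mid n$ the zero eigenvalue lies in every vertex's support, so Theorem \ref{sc}(2) rules out strong cospectrality and hence PGST; otherwise one produces a vanishing integer combination of eigenvalues in the support whose coefficient sum is odd, contradicting Theorem \ref{th1}. Where you genuinely differ is in how that relation is built and which equivalent condition is invoked. The paper picks an odd prime $p\mid n$, writes $n=mp$, and multiplies the vanishing sum $1+\omega_p+\cdots+\omega_p^{p-1}=0$ by $2\cos\left(2\pi/n\right)$ to get the $p$-term relation $\lambda_1+\sum_{r}\lambda_{mr+1}+\sum_r\lambda_{mr-1}=0$, which covers $n$ odd and $n\equiv 2\pmod 4$ uniformly; it then uses the limit formulation (condition 2) to reach $1=(-1)^p=-1$. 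You instead split into two congruence cases and apply the parity criterion (condition 4) directly: for odd $n$ the trace relation $\sum_{l=0}^{n-1}\lambda_l=0$, with $\lambda_0=2$ the unique fixed point of $l\mapsto n-l$ and all other eigenvalues paired with coefficient $2$, giving odd coefficient sum $n$; for $n=2m$ with $m$ odd, the relation $\sum_{j=0}^{m-1}\lambda_{2j}=0$, which is the trace relation of the copy of $\sigma(C_m)$ sitting inside $\sigma(C_n)$, giving odd coefficient sum $m$. Your delicate checks (distinctness of the paired eigenvalues, absence of $0$ among them when $4\nmid n$, and fullness of the eigenvalue support via vertex-transitivity) are all sound, and both arguments implicitly use the same fact from the proof of Theorem \ref{th1} that PGST between the twins forces phase factor $\gamma=-1$. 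What your route buys is that the relations are completely explicit, needing no choice of prime factor and no product-to-sum manipulation; what the paper's buys is a single relation handling both congruence classes at once, hence a shorter case analysis.
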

\begin{proof}
If $n$ is divisible by $4,$ then support of each vertex in $C_n$ contains $\lambda=0.$ Thus, $\up{2}C_{n}$ does not admit strong cospectrality by Theorem \ref{sc}. Now, let $n=mp$, where $p$ is an odd prime, and $\omega_p$ is the primitive $p$-th root of unity. Then
%Hence $\up{2}C_{n}$ does not exhibit PGST whenever $n$ is divisible by $4.$ 
\[1+\omega_p+\omega_p^2+\ldots+\omega_p^{p-1}=0.\]
It follows that
\begin{center}
    $1+2\sum_{r=1}^{(p-1)/2}\cos{\left(\frac{2r\pi}{p}\right)}=0.$
\end{center}
Multiplying both sides by $2\cos{\left(2\pi/n\right)}$ yields
\[\lambda_1+\sum_{r=1}^{(p-1)/2}\lambda_{mr+1}+\sum_{r=1}^{(p-1)/2}\lambda_{mr-1}=0.\]
Suppose the sum of the eigenvalues of $C_n$ in the above expression be $L.$ By Theorem \ref{th1}, if there is PGST in $\up{2}C_{n}$ then $1=\lim\limits_{k\to\infty}\exp{\left(-2iL\tau_k\right)}=(-1)^p=-1,$ a contradiction. %Hence the result follows.
\end{proof}

Note that $C_4$ admits PST but $\up{2}C_{4}$ does not. Thus, a blow-up of a graph with PST need not have PST.

\section{Paths}

Next, we characterize PGST in blow-ups of paths. The eigenvalues of $P_n$ are
\begin{center}    $\theta_j=2\cos\left(\frac{j\pi}{n+1}\right),\quad 1\leq j\leq n.$
\end{center}
Moreover, for every vertex $u$ of $P_n$, \cite[Lemma 3.3.6]{Bommel2019} yields
\begin{equation}
\label{supp}
    \sigma_u(P_n)=\{\theta_j:n+1\nmid uj\}
\end{equation}

The following result characterizes PGST on blow-ups of $P_n$ whenever $n=p-1$ or $n=2p-1$ for some prime $p$, or $n=2^t-1$ for some integer $t\geq 2$.

\begin{thm}
\label{pnthm}
  Suppose $n=p-1$ or $n=2p-1$ for some prime $p$, or $n=2^m-1$. Then PGST occurs between vertices $(0,u)$ and $(1,u)$ in $\up{2}P_{n}$ if and only if 
    \begin{enumerate}
        \item $n=p-1$ for some prime $p$ and $u$ is any vertex in $P_n$, or
        \item $n=2p-1$ for some prime $p$ or $n=2^t-1$ for some integer $t$, and $u$ is even.
    \end{enumerate}
\end{thm}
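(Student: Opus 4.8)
The plan is to set $N=n+1$ and translate everything into the arithmetic of the path eigenvalues $\theta_j=2\cos(j\pi/N)$ by way of Theorem \ref{th1}. Since strong cospectrality is necessary for PGST, by Theorem \ref{sc}(2) any PGST forces $0\notin\sigma_u(P_n)$; and once $0\notin\sigma_u(P_n)$, Theorem \ref{th1}(4) reduces the question to a purely number-theoretic one: PGST between $(0,u)$ and $(1,u)$ in $\up{2}P_n$ holds if and only if every integer relation $\sum_{\theta_j\in\sigma_u(P_n)}m_j\theta_j=0$ has $\sum m_j$ even. So the whole proof becomes: determine the support $\sigma_u(P_n)$, and analyze the parity of coefficient sums in its integer relation lattice.

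First I would clear the strong-cospectrality gate. The only possible zero eigenvalue is $\theta_{N/2}$, which exists exactly when $N$ is even (i.e. $n$ odd), and by \eqref{supp} it lies in $\sigma_u(P_n)$ iff $N\nmid u\cdot(N/2)$, i.e. iff $u$ is odd. Hence when $N=2p$ or $N=2^t$ (so $n$ is odd), $0\in\sigma_u(P_n)$ precisely for odd $u$, and Theorem \ref{sc}(2) rules out PGST for those $u$; this already gives the necessity of ``$u$ even'' in part (2). When $N=p$ is an odd prime, $P_n$ has no zero eigenvalue, so every $u$ passes the gate, matching ``$u$ any'' in part (1). It then remains to prove that PGST does occur in the three positive situations: $N=p$ (all $u$), $N=2p$ ($u$ even), and $N=2^t$ ($u$ even).

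Next I would compute the supports explicitly from \eqref{supp}. A short divisibility check gives: for $N=p$ prime, $\sigma_u(P_n)=\cb{\theta_1,\ldots,\theta_{p-1}}$ is the full spectrum for every $u$ (since $p\nmid uj$ for $1\le u,j\le p-1$); for $N=2p$ and $u$ even one removes exactly $\theta_p=0$ (using $\gcd(u/2,p)=1$); and for $N=2^t$ and $u$ even with $a=\nu_2(u)$ one removes exactly the $\theta_j$ with $2^{t-a}\mid j$, which in particular removes $\theta_{2^{t-1}}=0$. In every case the index set $S$ of the support is symmetric under $j\mapsto N-j$, and since $\theta_{N-j}=-\theta_j$ this yields the family of ``pairing'' relations $\theta_j+\theta_{N-j}=0$, each with coefficient sum $2$.

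The crux is to show these pairings \emph{generate the entire integer relation lattice}, for then every relation has coefficient sum $2\sum(\cdot)$, hence even, and Theorem \ref{th1}(4) delivers PGST. Picking one representative from each pair $\cb{j,N-j}\subseteq S$ gives a half-set $H$, and a standard kernel computation shows the relation lattice equals $\langle\theta_j+\theta_{N-j}=0\rangle_{\Zl}$ exactly when $\cb{\theta_j:j\in H}$ is $\Ql$-linearly independent. This is where the hypothesis $N\in\cb{p,2p,2^t}$ is essential: all $\theta_j$ lie in the real cyclotomic field $\Ql(\zeta_{2N})^{+}$ (with $\zeta_{2N}=e^{i\pi/N}$) of degree $\varphi(2N)/2$, and for these three forms of $N$ one checks $|H|\le \varphi(2N)/2$ --- with equality $(p-1)/2$ when $N=p$, equality $p-1$ when $N=2p$, and strict inequality $2^{t-1}-2^{\nu_2(u)-1}<2^{t-1}$ when $N=2^t$. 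The main obstacle, and the step I expect to be hardest, is upgrading this count to genuine $\Ql$-linear independence (equivalently, that the half-set is a basis in the first two cases): one must verify the cosine eigenvalues in $H$ are truly independent over $\Ql$, not merely few in number. This is exactly the place where $N$ being a prime, twice a prime, or a power of two is decisive --- for other $N$ the half-set is forced to be too large, producing an extra relation whose coefficient sum turns out to be odd and which would destroy PGST. Once the independence is established, all three positive cases follow, completing the equivalence.
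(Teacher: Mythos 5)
Your proposal is correct in outline and takes a genuinely different route from the paper: the paper proves the positive cases by verifying condition 2 of Theorem \ref{th1} directly, feeding $\alpha_j=\frac{1}{2}(\theta_j+1)$ into Kronecker's theorem (Theorem \ref{kron}) to produce times $\tau_k=(2q_k-1)\frac{\pi}{2}$ with $\exp\ob{-2i\theta_j\tau_k}\to -1$ on the support, whereas you verify the parity condition 4 of Theorem \ref{th1} by showing the integer-relation lattice of $\sigma_u(P_n)$ is generated by the pairing relations $\theta_j+\theta_{N-j}=0$. Your strong-cospectrality gate (via \eqref{supp}, $0\in\sigma_u(P_n)$ precisely when $n+1$ is even and $u$ is odd), your explicit support computations for $N=p$, $N=2p$, $N=2^t$, and the observation that lattice generation by pairings forces every coefficient sum to be even are all correct, and together they do give the necessity of the conditions in parts 1 and 2.

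However, there is a genuine gap at exactly the point you label the ``main obstacle'': you never establish the $\Ql$-linear independence of the half-set $\cb{\theta_j : j\in H}$, which is equivalent to the $\Ql$-linear independence of the positive eigenvalues of $P_n$. The degree count $|H|\le\varphi(2N)/2$ that you do supply cannot substitute for it: a set of algebraic numbers smaller in size than the degree of the field containing them can still satisfy rational relations, so no conclusion about PGST follows from the count alone. This independence is precisely the fact the paper imports from the literature --- it cites \cite{god4} for the statement that the positive eigenvalues of $P_n$ are linearly independent over $\Ql$ if and only if $n=p-1$, $n=2p-1$ or $n=2^t-1$ --- and it carries the entire weight of the sufficiency direction. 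Without either that citation or an actual proof (for $N=p$ one can argue from the minimal polynomial of a primitive $p$-th root of unity that $\cb{\zeta_p^k+\zeta_p^{-k}}_{k=1}^{(p-1)/2}$ is independent; the cases $N=2p$ and $N=2^t$ require separate arguments), your proposal proves only the ``only if'' half of the theorem: none of the three positive cases of PGST is established.
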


\begin{proof}
    Suppose that either (i) $n=p-1$ for some prime $p$, (ii) $n=2p-1$ for some prime $p$, (iii) $n=2^t-1$ for some integer $t$. From \cite{god4}, (i), (ii) or (iii) hold if and only if the positive eigenvalues of $P_n$ are linearly independent over $\mathbb{Q}$. Thus, if (i) holds, then applying Theorem \ref{kron} with $\alpha_j=\frac{1}{2}(\theta_j+1)$, we obtain integers $q_k$ and $p_j$ such that $|q_k\theta_j-p_j-\alpha_j|<\frac{1}{2\pi k}.$
    Equivalently, $|(2 q_k-1)\frac{\pi}{2}(2\theta_j)-(2 p_j+1)\pi|<\frac{1}{ k}$. Setting $\tau_k=(2 q_k-1)\frac{\pi}{2}$, we obtain
    \begin{equation}
        \label{p-1}\lim\limits_{k\to\infty}\exp{\left(-2i\theta_j\tau_k\right)}=\exp{((2p_j+1)\pi)}=-1
    \end{equation}
    for each eigenvalue $\theta_j\in\sigma_u(P_n)$. Since $P_n$ in this case has an invertible adjacency matrix, $(0,u)$ and $(1,u)$ are strongly cospectral for each vertex $u$ of $P_n$ by Example \ref{tree}. Combining this with (\ref{p-1}) yields PGST between $(0,u)$ and $(1,u)$ for each vertex $u$ of $P_n$ by virtue of Theorem \ref{th1}. This proves 1. The same argument works for cases (ii) and (iii), except that we need $u$ to be an even vertex of $P_n$ by Example \ref{tree1} to ensure that $(0,u)$ and $(1,u)$ are strongly cospectral. Hence, 2 holds.
\end{proof}

Next, we rule out PGST in blow-ups of $P_n$ whenever  $n\notin\{p-1,2p-1,2^t-1\}$.

\begin{thm}
\label{pnlem}
    Let $n=2^tr-1$, where $t\geq 0$ is an integer and $r>0$ is odd. Then PGST does not occur between vertices $(0,u)$ and $(1,u)$ in $\up{2}P_{n}$ whenever
    \begin{enumerate}
        \item $t\geq 0$ and $r$ is an odd composite number, or
        \item $t>0$, $r$ is an odd prime number and $u$ is not a multiple of $2^{t-1}$.
    \end{enumerate}
\end{thm}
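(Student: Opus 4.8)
The plan is to read off from Theorem~\ref{th1} and Theorem~\ref{sc}(2) exactly what must be contradicted. Write $N=n+1=2^{t}r$, recall $\theta_j=2\cos(j\pi/N)$ and, from \eqref{supp}, that $\sigma_u(P_n)=\{\theta_j:N\nmid uj\}$. Since $N\mid uj$ iff $j$ is a multiple of $N'=N/\gcd(u,N)$, the support consists precisely of the indices in $[1,N-1]$ that are \emph{not} multiples of $N'$. PGST between $(0,u)$ and $(1,u)$ can fail for one of two reasons: either $0\in\sigma_u(P_n)$, so strong cospectrality fails by Theorem~\ref{sc}(2); or $0\notin\sigma_u(P_n)$ but there are integers $m_j$ with $\sum m_j\theta_j=0$ over $\theta_j\in\sigma_u(P_n)$ while $\sum m_j$ is odd, so condition~(4) of Theorem~\ref{th1} fails. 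Now $0$ is an eigenvalue iff $t\ge1$, and then $0=\theta_{N/2}\in\sigma_u(P_n)$ iff $u$ is odd; so I first dispose of all odd $u$ (when $t\ge1$) by Theorem~\ref{sc}(2), and for every remaining vertex I must manufacture an odd integer relation among the support eigenvalues.

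The relations arise from vanishing sums of roots of unity, exactly as in the $\up{2}C_n$ computation above. For an odd prime $q\mid r$ put $D=2N/q$; multiplying $\sum_{k=0}^{q-1}e^{2\pi ik/q}=0$ by $e^{i\pi/N}$ and adding the conjugate gives
\begin{equation*}
\sum_{k=0}^{q-1}\theta_{1+kD}=0,
\end{equation*}
where $\theta$ is extended $2N$-periodically and evenly through $\theta_{-j}=\theta_j=\theta_{2N-j}$. This relation has $q$ unit coefficients, so its coefficient-sum is $q$, which is odd. Reducing each index $1+kD$ into $[1,N-1]$ merely relabels a term \emph{unless} it folds onto $0$ or $N$ (contributing the spurious value $\pm2\notin\sigma_u$) or two indices fold together; if neither happens the reduced relation lives on genuine support eigenvalues with $\sum m_j=q$ odd, which by Theorem~\ref{th1}(4) rules out PGST. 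Hence the whole task is to choose $q$ so that all $q$ folded indices stay in the support and miss $\{0,N\}$, and folding preserves the relevant residue (parity, or residue modulo a chosen prime) so it suffices to control the raw indices $1+kD$.

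I split according to $\nu_2(u)$. If $\nu_2(u)<t$ then $N'$ is even; since $D=2^{t+1}(r/q)$ is even, every $1+kD$ is odd, hence misses the even multiples of $N'$ and the even values $0,N$, and as $1+kD$ is odd with $t\ge1$ a check mod $4$ shows $(1+kD)+(1+lD)\equiv2$, so no two fold together. If $\nu_2(u)\ge t$ (which includes $t=0$) then $N'=r/\gcd(u,r)$ is odd and $>1$; I pick an odd prime $\ell\mid N'$ and—using that $r$ is composite—a prime $q\mid r$ with $\ell\mid r/q$ (take $q\neq\ell$ when $r$ has two prime factors, and $q=\ell$ when $r=\ell^{m}$ with $m\ge2$). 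Then $\ell\mid D$, so $1+kD\equiv1\pmod\ell$, whence every index avoids the multiples of $\ell$—in particular the multiples of $N'$ and the values $0,N$, all divisible by $\ell$—and $(1+kD)+(1+lD)\equiv2\not\equiv0\pmod\ell$ prevents folding collisions. In both branches $\sum m_j=q$ is odd, proving Case~1 for every vertex $u$. For Case~2 note that when $r$ is prime the only choice is $q=r$, so $r/q=1$ carries no prime $\ell$ and the odd-$N'$ branch is unavailable; this is exactly why the hypothesis demands $\nu_2(u)\le t-2$, forcing $N'$ even so the first branch applies (with odd $u$ handled by strong cospectrality).

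The step I expect to be most delicate is the index bookkeeping in the reduction: I must guarantee that the $q$ extended indices fold to $q$ \emph{distinct} eigenvalues, all lying in $\sigma_u(P_n)$ and never onto $\pm2$, because any coincidence or collapse would alter the parity of $\sum m_j$ and wreck the argument. The divisibility arrangement $\ell\mid D$ in the odd-$N'$ branch (respectively the parity argument $1+kD$ odd in the even-$N'$ branch) is precisely the device that keeps the folded indices confined to a single nonzero residue class and collision-free, and hence keeps the bookkeeping clean. Note that Kronecker's theorem is not needed here, as ruling out PGST requires only the algebraic obstruction in Theorem~\ref{th1}(4) rather than an approximation argument.
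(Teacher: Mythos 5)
Your proof is correct, and it reaches the same conceptual destination as the paper --- exhibiting a vanishing integer combination of support eigenvalues with an odd coefficient sum, which is incompatible with Theorem~\ref{th1} --- but by a genuinely more self-contained route. The paper imports its eigenvalue relations wholesale from van Bommel's work \cite{bom} (Equations 24--26 there), which are alternating sums $\sum_{i}(-1)^i\theta_{c+ip}=0$ with an odd number $r/p$ of terms, and then splits into subcases according to divisibility conditions involving $\gcd(a,2^tr)$ inherited from that paper; your proof instead manufactures the relations from scratch out of vanishing sums of $q$-th roots of unity, obtaining unsigned relations $\sum_{k=0}^{q-1}\theta_{1+kD}=0$ with $D=2N/q$, and then verifies membership in $\sigma_u(P_n)$ by a clean residue argument (parity when $N'$ is even, residues mod an odd prime $\ell\mid N'$ when $N'$ is odd, with the composite hypothesis on $r$ used exactly to guarantee a prime $q\mid r$ with $\ell\mid r/q$). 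Two things your version buys: it is independent of \cite{bom}, and it makes explicit the strong-cospectrality obstruction (odd $u$ with $t\geq 1$ has $0\in\sigma_u(P_n)$, so Theorem~\ref{sc}(2) applies), a point the paper's proof passes over silently; the paper's version buys brevity by outsourcing the number theory. One small misstatement to fix: a collision of folded indices would \emph{not} ``alter the parity of $\sum m_j$'' --- if two indices fold onto the same eigenvalue their unit coefficients merge into a coefficient $2$, and the total $\sum m_j=q$ is unchanged --- so your collision-freeness checks, while valid, are unnecessary; the only genuine requirements are that the folded indices avoid $0$ and $N$ (the spurious values $\pm 2$) and avoid the multiples of $N'$, both of which your residue argument establishes.
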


\begin{proof}
    To prove 1, let us first suppose that $t=0$ so that $n=r-1$, where $r$ is an odd composite number. From \cite[Equation 26]{bom}, we obtain
    \begin{equation*}
        \sum_{i=0}^{r/p-1}(-1)^i\theta_{c+ip}=0
    \end{equation*}
    where $c\in \{1,2\}$ and $p$ is a prime factor of $r$ such that $p\mid \frac{r}{\operatorname{gcd}(a,r)}$. Since $c+ip$ is not a multiple of $p$ for each $i\in\{0,1,\ldots,\frac{r}{p}-1\}$, (\ref{supp}) implies that each $\theta_{c+ip}$ above is contained in $\sigma_u(P_n)$ for any vertex $u$ of $P_n$. If there is PGST in $\up{2}P_{n}$, then Theorem \ref{th1} yields $\lim\limits_{k\to\infty}\exp{\left(\pm 2i\theta_{c+ip}\tau_k\right)}=-1$. Now, let $L$ be the expression on the left of the above equation. Since $r$ is composite and $p$ is prime, $\frac{r}{p}$ is odd, and so $1=\lim\limits_{k\to\infty}\exp{\left(-2iL\tau_k\right)}=(-1)^{\frac{r}{p}}=-1,$
a contradiction. In case $t>0$, then we have two subcases. First, if $r$ has no prime factor $p$ such that $p\mid \frac{2^tr}{\operatorname{gcd}(a,2^tr)}$, then $r\mid a$ and \cite[Equation 24]{bom} gives us
\begin{equation*}
        \sum_{i=0}^{r/p-1}(-1)^i\theta_{c+i2^tp}=0.
    \end{equation*}
However, if $r$ has a prime factor $p$ such that $p\mid \frac{2^tr}{\operatorname{gcd}(a,2^tr)}$, then \cite[Equation 25]{bom} yields
\begin{equation*}
        \sum_{i=0}^{r/p-1}(-1)^i\theta_{c+i2^t}=0.
    \end{equation*}
In both subcases, an argument similar to the case $t=0$ yields the desired result. Finally, the proof of the second case follows from the first case with $t>0$ and $r$ has no prime factor $p$ such that $p\mid \frac{2^tr}{\operatorname{gcd}(a,2^tr)}$.
\end{proof}

Combining Theorems \ref{pnthm} and \ref{pnlem}, the only case left to check for PGST in blow-ups of $P_n$ is when $n=2^tr-1$, where $t\geq 2$, $r$ is an odd prime and $u$ is a multiple of $2^{t-1}$. For $P_{11}$ with an even vertex $u$, one checks that $\theta_5,\theta_9,\theta_{11}\in\sigma_u(G)$ and $\lambda_5-\theta_9+\theta_{11}=0$. If PGST occurs between $(0,u)$ and $(1,u)$ in $\up{2}P_{11}$, then Theorem \ref{th1} yields $\lim\limits_{k\to\infty}\exp{\left(\pm 2i\theta_{j}\tau_k\right)}=-1$ for each $\theta_j\in\sigma_u(G)$. Consequently, $1=\lim\limits_{k\to\infty}\exp{\left(-2i(\lambda_5-\theta_9+\theta_{11})\tau_k\right)}=(-1)^{3}=-1$,
a contradiction. Thus, $(0,u)$ is not involved in PGST in $\up{2}P_{11}$ for each even vertex $u$ of $P_{11}$. This leads us to the following conjecture.

\begin{con}
\label{con}
Let $n=2^tr-1$, where $t\geq 2$ and $r$ is an odd prime number. If $u$ is a multiple of $2^{t-1}$, then PGST does not occur between $(0,u)$ and $(1,u)$ in $\up{2}P_{n}$.
\end{con}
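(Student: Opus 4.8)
The plan is to establish the contrapositive of Theorem~\ref{th1}: I will try to exhibit integers $m_j$ with $\sum_{\theta_j\in\sigma_u(P_n)}m_j\theta_j=0$ whose coefficient sum $\sum_j m_j$ is \emph{odd}, which by Theorem~\ref{th1} (and the fact that $\gamma=-1$ is forced for twins) rules out PGST between $(0,u)$ and $(1,u)$. Write $N=n+1=2^tr$ and $u=2^{t-1}w$. First, $0\notin\sigma_u(P_n)$, since $\theta_{N/2}=0$ and $N\mid u\cdot\tfrac N2=2^{2t-2}wr$ holds for all $t\ge 2$; hence Theorem~\ref{th1} applies. By \eqref{supp}, $\theta_j\in\sigma_u(P_n)$ iff $2r\nmid wj$, so the \emph{shape} of the support depends only on the parity of $w$: if $w$ is odd then $\sigma_u(P_n)=\{\theta_j:2r\nmid j\}$ (or $\{\theta_j:j\text{ odd}\}$ in the degenerate case $w=r$, i.e.\ $u=N/2$), whereas if $w$ is even then $\sigma_u(P_n)=\{\theta_j:r\nmid j\}$.

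The engine is a single cosine identity with an odd number of terms. Writing $\zeta=e^{\mathbf i\pi/N}$ and $\theta_j=\zeta^j+\zeta^{-j}$, set $\rho=-\zeta^{2^t}=-e^{\mathbf i\pi/r}$. Since $r$ is odd, $\rho^r=(-1)^re^{\mathbf i\pi}=1$ while $\rho\ne 1$, so $\sum_{i=0}^{r-1}\rho^i=0$; thus $\sum_{i=0}^{r-1}(-1)^i\zeta^{c+i2^t}=\zeta^c\sum_i\rho^i=0$, and adding the conjugate gives
\[
\sum_{i=0}^{r-1}(-1)^i\,\theta_{c+i2^t}=0\qquad\text{for every }c\in\Zl .
\]
This relation has $r$ terms and coefficient sum $\sum_{i=0}^{r-1}(-1)^i=1$, which is odd. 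Reducing indices modulo $2N$ and using evenness $\theta_{-j}=\theta_j$ folds every index into $\{0,\dots,N\}$ without changing any eigenvalue, hence without changing the coefficient sum; choosing $c$ with $2^t\nmid c$ keeps all reduced indices in $\{1,\dots,N-1\}$, away from the non-eigenvalues $\theta_0=2$ and $\theta_N=-2$. Note also that $\theta_{c+i2^t}\in\sigma_u(P_n)$ can be tested on the \emph{original} index, since $N\mid uj'$ iff $N\mid u(c+i2^t)$.

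When $w$ is odd I would take $c$ odd. Then each $c+i2^t$ is odd, so it is never even (hence never divisible by $2r$), and the test $2r\nmid w(c+i2^t)$ holds for all $i$; thus all $r$ eigenvalues of the identity lie in $\sigma_u(P_n)$. The relation violates condition~4 of Theorem~\ref{th1}, so PGST does not occur. This settles the conjecture for every $u$ with $u/2^{t-1}$ odd, and in particular recovers the $\up{2}P_{11}$ computation (with $c=5$ one gets $\theta_5-\theta_9+\theta_{11}=0$).

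The main obstacle is precisely the complementary case $w$ even, where $\sigma_u(P_n)=\{\theta_j:r\nmid j\}$. Here the identity above is unusable: among $c,c+2^t,\dots,c+(r-1)2^t$ the residues modulo $r$ form a complete system, so one term always satisfies $r\mid(c+i2^t)$ and falls outside the support, and no choice of $c$ avoids this. Moreover $\{\theta_j:r\nmid j\}$ is closed under $j\mapsto N-j$, so its eigenvalues come in $\pm$ pairs ($\theta_{N-j}=-\theta_j$); whenever the positive half $\{\theta_j:r\nmid j,\ 1\le j<N/2\}$ is linearly independent over $\Ql$, every relation forces $m_j=m_{N-j}$ and therefore has \emph{even} coefficient sum, so condition~4 holds and PGST \emph{does} occur. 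This already happens for $\up{2}P_{11}$ at $u=4$, where $\sigma_u=\{\pm\theta_1,\pm\theta_2,\pm\theta_4,\pm\theta_5\}$ has $\Ql$-independent positive half, so $(0,4)$ and $(1,4)$ admit PGST. Thus the real content of the remaining case is a number-theoretic linear-independence statement for $\{\theta_j:r\nmid j,\ 1\le j<N/2\}$, and the analysis indicates that Conjecture~\ref{con} should be restricted to $u\equiv 2^{t-1}\pmod{2^t}$ (that is, $u/2^{t-1}$ odd); for $u$ divisible by $2^t$ the blow-up $\up{2}P_n$ appears to exhibit PGST rather than forbid it.
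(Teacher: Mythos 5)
There is no proof in the paper to compare against: the statement is posed as Conjecture \ref{con}, and the Future Work section explicitly lists its truth as an open question; the only evidence offered is the $P_{11}$ computation preceding it. Measured against that, your proposal does two things, both of which check out. First, for $u/2^{t-1}$ odd you give a genuine proof: the identity $\sum_{i=0}^{r-1}(-1)^i\theta_{c+i2^t}=0$ (valid because $-e^{i\pi/r}$ is a nontrivial $r$-th root of unity), taken with $c$ odd and $1\le c<2^t$, involves $r$ distinct eigenvalues all lying in $\sigma_u(P_n)$ --- odd indices are never divisible by $2r$ --- and has coefficient sum $1$, so condition 4 of Theorem \ref{th1} fails and PGST is impossible. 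This is exactly the mechanism the paper itself uses in Theorem \ref{pnlem} and in its $P_{11}$ example (your $c=5$ reduction recovers $\theta_5-\theta_9+\theta_{11}=0$), so this half is a correct completion in the paper's own style.

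Second, and more importantly, your analysis of the case $u/2^{t-1}$ even is correct and shows the conjecture is \emph{false as stated}; it also exposes an error in the paper's supporting computation. By \eqref{supp}, $\theta_9\in\sigma_u(P_{11})$ iff $12\nmid 9u$ iff $4\nmid u$, so the paper's claim that $\theta_5,\theta_9,\theta_{11}\in\sigma_u(P_{11})$ for \emph{every} even $u$ fails at $u\in\{4,8\}$. For $u=4$ one has $\sigma_4(P_{11})=\left\{\pm1,\pm\sqrt3,\pm\tfrac{\sqrt6+\sqrt2}{2},\pm\tfrac{\sqrt6-\sqrt2}{2}\right\}$, which is closed under negation and whose positive half is linearly independent over $\Ql$ (because $1,\sqrt2,\sqrt3,\sqrt6$ are); hence every integer relation forces $m_j=m_{12-j}$ and has even coefficient sum. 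Since also $0\notin\sigma_4(P_{11})$, Theorems \ref{sc}(2) and \ref{th1} give PGST between $(0,4)$ and $(1,4)$ in $\up{2}P_{11}$. As $4$ is a multiple of $2^{t-1}=2$, this is a counterexample to Conjecture \ref{con}, which should accordingly be restricted to $u\equiv 2^{t-1}\pmod{2^t}$ --- precisely the case your first argument settles affirmatively. The one soft spot is your closing suggestion that PGST always occurs when $2^t\mid u$: that rests on a linear-independence statement you verify only for $P_{11}$, so it remains open in general; but a single counterexample suffices for the refutation, and yours is valid.
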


\begin{rem}
\label{rem3}
In \cite[Theorem 4(2)]{bom}, PGST occurs between $u$ and $12-u$ in $P_{11}$ whenever $u$ is even. As PGST does not occur between $(0,u)$ and $(1,u)$ in $\up{2}P_{n}$,  blow-up graphs need not preserve PGST.
\end{rem}

We end this section with a result on periodicity and PST on blow-ups of $P_n.$ Theorem \ref{per} implies that periodicity of a vertex $(j,u)$ in $\up{m}P_{n}$ implies periodicity of $u$ in $P_n$. From the proof of \cite[Theorem 3.3.8]{Bommel2019}, one can infer that the only periodic vertices in $P_n$ are the vertices of $P_2$ and $P_3$ and the middle vertex $u$ of $P_5$ with $\sigma_u(P_5)=\{0,\pm\sqrt{3}\}$. Applying Theorems \ref{sc}(2) and \ref{th11} yields the following result.

\begin{thm}
Consider the set $S_u=\cb{(j,u):j\in\Zl_m}$. The following hold.
\begin{enumerate}
\item Each vertex in $S_u$ is periodic in $\up{m}P_{n}$ for all $m$ if and only if either (i) $u$ is any vertex of $P_2$, (ii) $u$ is any vertex of $P_3$, or (iii) $u$ is the middle vertex of $P_5$. Moreover, a vertex in $S_u$ has period either $\tau=\frac{2\pi}{m}$, $\tau=\frac{2\pi}{\sqrt{2}m}$, or $\tau=\frac{2\pi}{\sqrt{3}m}$, respectively. 

\item $\up{2}P_{n}$ admits PST between $(0,u)$ and $(1,u)$ if and only if (i) $u$ is any vertex of $P_2$, or (ii) $u$ is the middle vertex of $P_3$. Moreover, the minimum PST time in $\up{2}P_{2}$ is $\tau=\frac{\pi}{2}$. Otherwise, it is $\tau=\frac{\pi}{2\sqrt{2}}$. \end{enumerate}
\end{thm}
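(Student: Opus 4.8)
The plan is to prove the final theorem by reducing everything to the spectral criteria already established in Theorems~\ref{per}, \ref{sc}(2), and \ref{th11}, together with the classification of periodic vertices of $P_n$ quoted just before the statement. The backbone is the observation that both periodicity and PST in $\up{m}P_n$ are governed entirely by the eigenvalue support $\sigma_u(P_n)$, whose description is given by \eqref{supp}. So first I would record, for each of the three candidate periodic configurations, exactly what the support is: for any vertex of $P_2$ we have $\sigma_u(P_2)=\{\pm 1\}$; for the end vertices of $P_3$ we have $\sigma_u(P_3)=\{\pm\sqrt 2\}$ while the middle vertex has $\sigma_u(P_3)=\{0,\pm\sqrt 2\}\setminus\{\text{something}\}$, which I would read off carefully from \eqref{supp}; and for the middle vertex of $P_5$ we have $\sigma_u(P_5)=\{0,\pm\sqrt 3\}$ (as stated in the excerpt, noting $\pm\sqrt 3$ are the nonzero supported eigenvalues). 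Getting these supports exactly right is the crux, since the parity/$\nu_2$ conditions in Theorems~\ref{per} and \ref{th11} are sensitive to whether $0$ lies in the support.

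For part~1, I would argue as follows. By Theorem~\ref{per}, each vertex in $S_u$ is periodic in $\up{m}P_n$ for all $m$ if and only if the elements of $\sigma_u(P_n)$ are all integers or all integer multiples of a fixed $\sqrt\Delta$. Theorem~\ref{per} also tells us that periodicity of $S_u$ forces periodicity of $u$ in $P_n$, and the quoted consequence of \cite[Theorem 3.3.8]{Bommel2019} says the only periodic vertices of $P_n$ are those of $P_2$, $P_3$, and the middle vertex of $P_5$. So I would take these three families as the only candidates and verify the integer-multiple condition directly from the supports computed above: $\{\pm1\}$ are integer multiples of $\sqrt 1$, $\{\pm\sqrt 2\}$ of $\sqrt 2$, and $\{0,\pm\sqrt 3\}$ of $\sqrt 3$ (since $0=0\cdot\sqrt 3$). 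The period computation then follows by applying the last clause of Theorem~\ref{per}, which divides the period $\tau$ of $u$ in $P_n$ by $m$; the periods $2\pi$, $\frac{2\pi}{\sqrt 2}$, $\frac{2\pi}{\sqrt 3}$ of $u$ in the base graph come from the least common period of the exponentials $e^{-i\lambda\tau}$ over $\lambda\in\sigma_u$, giving the stated $\frac{2\pi}{m}$, $\frac{2\pi}{\sqrt 2\,m}$, $\frac{2\pi}{\sqrt 3\,m}$.

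For part~2, I would use Theorem~\ref{th11}, whose hypothesis requires $0\notin\sigma_u(P_n)$; combined with Theorem~\ref{sc}(2) this already restricts to vertices whose support avoids $0$, immediately eliminating the middle vertex of $P_5$ (support contains $0$). Among the remaining periodic candidates I would check condition~4 of Theorem~\ref{th11}, namely that each $\lambda_j=b_j\sqrt\Delta$ with the $\nu_2(b_j)$ all equal. For $P_2$, $\sigma_u=\{\pm1\}$ gives $b_j=\pm1$, $\Delta=1$, and $\nu_2(\pm1)=0$ for both, so PST occurs; for the middle vertex of $P_3$, I must confirm from \eqref{supp} that its support is $\{\pm\sqrt2\}$ (i.e.\ $0\notin\sigma_u$), whence $b_j=\pm1$, $\Delta=2$, $\nu_2$ equal, giving PST. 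The end vertices of $P_3$ and all non-middle configurations would be excluded either because $0\in\sigma_u$ or because the $\nu_2(b_j)$ differ. The main obstacle, and the step I expect to require the most care, is pinning down the precise support of the distinguished vertices of $P_3$ and $P_5$ via \eqref{supp}—in particular verifying whether $0$ is or is not supported at the middle vertex of $P_3$—since the entire case distinction in part~2 (and the period labels in part~1) hinges on these membership facts. Finally, the PST times $\frac{\pi}{2}$ for $\up2P_2$ and $\frac{\pi}{2\sqrt2}$ otherwise follow from Theorem~\ref{th11}(2): solving $e^{-2i\lambda\tau}=-1$ for the supported $\lambda$ gives the minimal $\tau$, which for $\lambda=\pm1$ is $\frac{\pi}{2}$ and for $\lambda=\pm\sqrt2$ is $\frac{\pi}{2\sqrt2}$.
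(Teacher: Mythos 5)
Your proposal is correct and takes essentially the same route as the paper, which likewise obtains the theorem in one stroke by combining Theorem \ref{per} with the quoted classification of periodic vertices of $P_n$ (the vertices of $P_2$ and $P_3$ and the middle vertex of $P_5$) and then applying Theorems \ref{sc}(2) and \ref{th11}. One small slip to fix: in your opening paragraph the supports of the end and middle vertices of $P_3$ are interchanged---by \eqref{supp} the end vertices have $\sigma_u(P_3)=\{0,\pm\sqrt{2}\}$ while the middle vertex has $\sigma_u(P_3)=\{\pm\sqrt{2}\}$---but your part-2 case analysis uses the correct orientation (middle vertex avoids $0$, end vertices do not), so the argument goes through as written.
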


\section{Double and subdivided stars}

A double star graph $S_{k,\ell}$ is a tree resulting from attaching $k$ and $\ell$ pendent vertices to the vertices of $K_2$. The following result characterizes PGST in blow-ups of double stars.

\begin{thm}
    Let $u$ and $v$ be the degree $k$ and $\ell$ vertices of $S_{k,\ell}$. The following hold in $\up{2}S_{k,\ell}$.
    \begin{enumerate}
        \item If $k\geq 2$, then $(0,w)$ is not involved in PGST for every $w\in N_{S_{k,\ell}}(u)\backslash\{v\}$.
        \item Let $k=1$. Then PGST occurs between the pairs $\{(0,u),(1,u)\}$, $\{(0,v),(1,v)\}$ and  $\{(0,w),(1,w)\}$, where $w\in N_{S_{k,\ell}}(u)\backslash\{v\}$. Further, if $x\in N_{S_{k,\ell}}(v)\backslash\{u\}$, then PGST occurs between $(0,x)$ and $(1,x)$ if and only if $\ell=1$.
        \item Let $k=\ell$. There is PGST between $(0,u)$ and $(1,u)$ if and only if $4k+1$ is not a perfect square. 
        \item If $k\neq \ell$ and all non-zero eigenvalues of $S_{k,\ell}$ are linearly independent over $\mathbb{Q}$, then PGST occurs between $(0,u)$ and $(1,u)$.
    \end{enumerate}
\end{thm}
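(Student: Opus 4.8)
The plan is to reduce every assertion to the eigenvalue support $\sigma_z(S_{k,\ell})$ of the relevant vertex $z$, together with the criteria of Theorems \ref{sc}(2), \ref{th1} and \ref{th11}. First I would compute the spectrum of $S_{k,\ell}$ from the equitable partition into the four cells $\{u\}$, $\{v\}$, the $k$ leaves at $u$, and the $\ell$ leaves at $v$, where $u$ (resp.\ $v$) is the centre carrying $k$ (resp.\ $\ell$) pendant vertices. The $4\times4$ quotient matrix has characteristic polynomial whose roots are the nonzero eigenvalues $\pm\beta_1,\pm\beta_2$ of $S_{k,\ell}$, where $\beta_1^2,\beta_2^2$ are the roots of $y^2-(k+\ell+1)y+k\ell=0$; the remaining $k+\ell-2$ eigenvalues equal $0$ and are carried entirely by leaf-difference vectors $\e_{w}-\e_{w'}$ at a common centre. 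A dimension count against these $k+\ell-2$ copies of $0$ forces every nonzero eigenvalue to be simple, so $E_\theta\e_z\neq\mathbf{0}$ iff the unique $\theta$-eigenvector is nonzero at $z$. Solving the quotient eigenvector equations shows this holds at both centres and at every leaf for all four nonzero $\theta$; hence $\sigma_u=\sigma_v=\{\pm\beta_1,\pm\beta_2\}$ with $0\notin\sigma_u,\sigma_v$, whereas a leaf $z$ additionally has $0\in\sigma_z$ precisely when its centre carries at least two leaves.

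With the supports in hand, Parts 1 and 2 are mostly bookkeeping. For Part 1 ($k\ge2$) a leaf $w\in N(u)\setminus\{v\}$ has $0\in\sigma_w$, so by Theorem \ref{sc}(2) $(0,w)$ is not strongly cospectral with $(1,w)$ and, since it cannot be strongly cospectral with any other vertex, $(0,w)$ cannot be involved in PGST. For Part 2 ($k=1$) the leaf $w$ of $u$ is unique, so $0\notin\sigma_w$, giving $\sigma_u=\sigma_v=\sigma_w=\{\pm\beta_1,\pm\beta_2\}$, while a leaf $x$ of $v$ has $0\in\sigma_x$ iff $\ell\ge2$. The decisive arithmetic step is to show $\beta_1,\beta_2$ are linearly independent over $\mathbb{Q}$: here $\beta_1\beta_2=\sqrt{\ell}$ and $\beta_1^2+\beta_2^2=\ell+2$, and $\beta_1/\beta_2=\sqrt{y_1/y_2}$ is irrational because $y_1/y_2$ involves $\sqrt{\ell^2+4}$, which is irrational for every $\ell\ge1$. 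Consequently any relation $m_1\beta_1-m_2\beta_1+m_3\beta_2-m_4\beta_2=0$ forces $m_1=m_2$ and $m_3=m_4$, so condition 4 of Theorem \ref{th1} holds; combined with strong cospectrality this yields PGST for $u,v,w$ unconditionally and for $x$ exactly when $\ell=1$.

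For Part 3 ($k=\ell$) the $u\leftrightarrow v$ symmetry splits the quotient into symmetric and antisymmetric parts and gives the four nonzero eigenvalues $\tfrac{\pm1\pm\mu}{2}$ with $\mu=\sqrt{4k+1}$. If $4k+1$ is not a perfect square then $\mu$ is irrational; writing each eigenvalue as $\tfrac{\varepsilon_1+\varepsilon_2\mu}{2}$ and using independence of $1,\mu$ over $\mathbb{Q}$, any integer relation forces the coefficient sum to be even, so condition 4 of Theorem \ref{th1} holds and PGST occurs. If $4k+1=s^2$ then all four eigenvalues are integers, two of which, $\tfrac{s+1}{2}$ and $\tfrac{s-1}{2}$, are consecutive and hence have distinct $\nu_2$; since $0\notin\sigma_u$ and $\sigma_u\subseteq\mathbb{Z}$, the vertex is periodic (Theorem \ref{per}), PGST would coincide with PST, and Theorem \ref{th11} then forbids it. Part 4 ($k\neq\ell$, positive eigenvalues $\mathbb{Q}$-independent) is immediate: the same reduction of $(m_1-m_2)\beta_1+(m_3-m_4)\beta_2=0$ gives $m_1=m_2$, $m_3=m_4$, and since $0\notin\sigma_u$ for the centre, condition 4 of Theorem \ref{th1} yields PGST.

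The main obstacle I anticipate is the number-theoretic core rather than the spectral set-up: establishing $\mathbb{Q}$-linear independence of $\beta_1,\beta_2$ in the $k=1$ case and correctly diagnosing the perfect-square dichotomy when $k=\ell$. The former requires ruling out rationality of $\sqrt{\ell^2+4}$ and, more generally, confirming there is no hidden rational ratio between two nested radicals; the latter rests on the $2$-adic valuation argument through Theorem \ref{th11}, valid only because periodicity makes PGST and PST coincide. Conflating these two regimes, or overlooking that the independence hypothesis in Part 4 is exactly what makes condition 4 of Theorem \ref{th1} automatic, would be the easiest way to go wrong.
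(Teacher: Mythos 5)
Your proof is correct, and although its skeleton (compute the spectrum and supports, apply Theorem \ref{sc}(2), then a PGST criterion) matches the paper's, it differs at two substantive points. For part 1 the paper does not use eigenvalue supports at all: it observes that for $k\geq 2$ the leaves at $u$ are pairwise non-adjacent twins in $S_{k,\ell}$, so $T_w\cup T_x$ is a twin set of size at least four in $\up{2}S_{k,\ell}$ by Proposition \ref{twins}, and then cites \cite[Corollary 3.10]{mon1} to rule out strong cospectrality; your route via $0\in\sigma_w(S_{k,\ell})$ and Theorem \ref{sc}(2) reaches the same conclusion using only results internal to the paper (the two arguments reflect the same structure, since non-adjacent twins generate $0$-eigenvectors supported on the leaves). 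More significantly, for the positive claims in parts 2--4 the paper reuses the Kronecker-approximation argument of Theorem \ref{pnthm}(1), while you verify the parity condition 4 of Theorem \ref{th1} directly. Your choice is the more robust one: in part 3, with $\mu=\sqrt{4k+1}$ irrational, the numbers $1$, $\tfrac{1}{2}(1+\mu)$, $\tfrac{1}{2}(\mu-1)$ are \emph{not} linearly independent over $\mathbb{Q}$ (the last two differ by $1$), so the hypothesis of Theorem \ref{kron} fails and the paper's ``it again follows'' cannot be a literal reuse of that argument (a Kronecker argument can be salvaged using $\mu$ alone, but it needs modification), whereas your mod-$2$ computation $\sum_j m_j\equiv\sum_j m_j\varepsilon_1^{(j)}\equiv 0\pmod 2$ applies verbatim and is exactly what condition 4 demands. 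You also supply details the paper only asserts: the spectrum and supports via the equitable partition and the quartic $y^2-(k+\ell+1)y+k\ell$ (note the paper's formula in part 2 has a typo --- $m=\sqrt{k^2+4}$ should read $\sqrt{\ell^2+4}$, which your version gets right), a proof of the $\mathbb{Q}$-independence of $\beta_1,\beta_2$ from the irrationality of $\sqrt{\ell^2+4}$, and in part 3 the explicit bridge from ``no PST'' to ``no PGST'' via periodicity (Theorem \ref{per} together with the remark after Theorem \ref{th11} that PGST and PST coincide at periodic vertices), a step the paper leaves implicit when it invokes Theorem \ref{th11} to rule out only PST.
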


\begin{proof}
%Let $A$ be the adjacency matrix of $S_{k,\ell}$.
We first show 1. Since $k\geq 2$, two vertices $w,x\in N_{S_{k,\ell}}(u)$ are twins, and so $T_w\cup T_x$ is a twin set in $\up{2}S_{k,\ell}$ with size at least four by Proposition \ref{twins}. By \cite[Corollary 3.10]{mon1}, $(0,w)$ cannot exhibit strong cospectrality in $\up{2}S_{k,\ell}$ for each $w\in N_{S_{k,\ell}}(u)\backslash\{v\}$. Thus, 1 holds.

To prove 2, let $k=1$ and $w$ be the lone vertex in $N_{S_{1,\ell}}(u)\backslash\{v\}$. Note that $\sigma_u(G)=\sigma_v(G)=\sigma_w(G)=\{\sqrt{(k+2\pm m)/2},-\sqrt{(k+2\pm m)/2}\}$, where $m=\sqrt{k^2+4}$. As $\{\sqrt{(k+2+m)/2},\sqrt{(k+2-m)/2}\}$ is a linearly independent set over $\mathbb{Q}$, the same argument in the proof of Theorem \ref{pnthm}(1) proves the first statement of 2, while 1 and the fact that the positive eigenvalues of $P_4$ yields the second statement of $2$.

To prove 3, suppose $k=\ell$. Then $\sigma_u(G)=\{\frac{1}{2}(1\pm\sqrt{4k+1}),-\frac{1}{2}(1\pm\sqrt{4k+1})\}$. If $4k+1$ is a perfect square, then $1+\sqrt{4k+1}$ and $-1+\sqrt{4k+1}$ are even integers whose largest powers of two dividing them are not equal. Invoking Theorem \ref{th11}(5), PST does not occur between $(0,u)$ and $(1,u)$ in $\up{2}S_{k,\ell}$. Now, if $4k+1$ is not a perfect square, then $\{\pm 1+\sqrt{4k+1}\}$ is a linearly independent set over $\mathbb{Q}$, from which it again follows that PGST occurs between $(0,u)$ and $(1,u)$ in $\up{2}S_{k,\ell}$. The same conclusion can be drawn with the assumption in 4. Hence, 3 and 4 hold. 
\end{proof}

\begin{figure}[!ht]
\centering
\begin{tikzpicture}[scale=.2,auto=left, rotate=0]
\tikzstyle{every node}=[circle, scale=0.65]

 \node[draw] (1) at (0,0) {$0$};
 \node[draw] (2) at (8,0) {$\pm(1-m)$};
 \node[draw] (3) at (16,0) {$1-m$};
 \node[draw] (4) at (-4.5,3.5) {$\pm1$};
 \node[draw] (5) at (-8.5,6.5) {$1$};
 \node[draw] (6) at (-4.5,-3.5) {$\pm1$};
 \node[draw] (7) at (-8.5,-6.5) {$1$};

  \draw[thick,black!70] (3)--(2)--(1)--(4)--(5);
  \draw[thick,black!70] (1)--(6)--(7);
  \draw[dashed,thick,black!70,bend right] (5)edge["$m-1$"](7);
 \end{tikzpicture}
\caption{An eigenvector of $SK_{1,m}$ corresponding to $\pm1.$}
\label{fg3}
\end{figure}
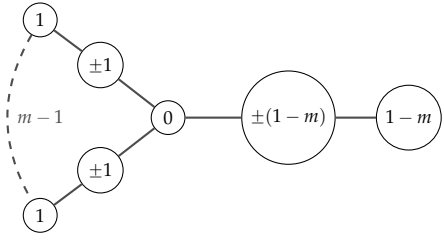

Next, we examine subdivided stars. First, observe that an eigenvector of a graph $G$ can be viewed a function $\x:V(G)\to\Rl,$ where $\x(u)$ is the $u$-th component of the vector $\x.$ Hence $A\x=\lambda\x$ if and only if 
\[\sum\limits_{v\sim u} \x(v)=\lambda \x(u),\text{ for all } u\in V(G).\]
A \textit{subdivided star} $SK_{1,m}$ is obtained by subdividing all edges of the star $K_{1,m}.$ The eigenvalues of $SK_{1,m}$ are known \cite{brou} and we state them below.

\begin{prop}\label{p6}
The eigenvalues of $SK_{1,m}$ for $m>1$ are $\sqrt{m+1},~(1)^{m-1},~0,~(-1)^{m-1},~-\sqrt{m+1}.$
\end{prop}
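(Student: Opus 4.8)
The plan is to determine the spectrum of $SK_{1,m}$ directly from the eigenvalue equation $\sum_{v\sim u}\x(v)=\lambda\x(u)$, exploiting the high degree of symmetry of the subdivided star. Label the center vertex $c$, the $m$ ``middle'' vertices $w_1,\ldots,w_m$ (each adjacent to $c$), and the $m$ leaves $\ell_1,\ldots,\ell_m$ (with $\ell_i\sim w_i$). First I would write out the three types of eigenvalue equations: at a leaf, $\x(w_i)=\lambda\x(\ell_i)$; at a middle vertex, $\x(c)+\x(\ell_i)=\lambda\x(w_i)$; and at the center, $\sum_{i=1}^m\x(w_i)=\lambda\x(c)$.

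The main structural idea is to split the eigenspaces according to whether the center vertex carries weight. For the eigenvectors with $\x(c)=0$ and $\sum_i\x(w_i)=0$, the leaf and middle equations decouple into independent $2\times2$ systems on each arm $(w_i,\ell_i)$, giving $\x(\ell_i)=\lambda\x(w_i)$ and $\x(w_i)=\lambda\x(\ell_i)$, hence $\lambda^2=1$, i.e. $\lambda=\pm1$. The constraint $\sum_i\x(w_i)=0$ cuts down an $m$-dimensional space of arm-patterns to dimension $m-1$ for each of $\lambda=1$ and $\lambda=-1$, which accounts for the multiplicities $(1)^{m-1}$ and $(-1)^{m-1}$ in Proposition \ref{p6}. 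For the symmetric eigenvectors (constant on all middle vertices and constant on all leaves, say $\x(w_i)=\beta$ and $\x(\ell_i)=\delta$ for all $i$, with $\x(c)=\alpha$), the three equations reduce to the $3\times3$ system $\beta=\lambda\delta$, $\alpha+\delta=\lambda\beta$, $m\beta=\lambda\alpha$. Eliminating variables yields the characteristic polynomial $\lambda(\lambda^2-(m+1))=0$, whose roots are $0,\pm\sqrt{m+1}$, accounting for the three simple eigenvalues.

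The key step is the dimension count: the decoupled arm-patterns contribute $2(m-1)$ to the eigenspaces for $\pm1$, and the symmetric subspace contributes $3$ eigenvalues, for a total of $2(m-1)+3=2m+1$, which matches the number of vertices of $SK_{1,m}$. This confirms we have found the complete spectrum with correct multiplicities and that no eigenvalues have been missed. I would verify that the three symmetric eigenvectors are genuinely orthogonal to (and independent from) the arm-pattern eigenvectors, which is automatic since the latter satisfy $\sum_i\x(w_i)=0$ while the $\pm\sqrt{m+1}$ and $0$ eigenvectors have a nonzero symmetric $w$-component (for $\lambda=0$ one checks $\beta=0$ but $\alpha,\delta\neq0$, still lying outside the arm-pattern space because of the nonzero center entry).

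The main obstacle I anticipate is being careful about the eigenvalue $\lambda=\pm1$ appearing in \emph{both} families: solving the $3\times3$ symmetric system could in principle also produce $\lambda=\pm1$ and inflate the stated multiplicity. The plan is to check that $\lambda^2=m+1$ (for $m>1$) and $\lambda=0$ are the only symmetric eigenvalues, so that $\pm1$ arises solely from the $\sum_i\x(w_i)=0$ arm-patterns; since $m>1$ forces $m+1\neq1$, the symmetric roots $\pm\sqrt{m+1}$ are distinct from $\pm1$, and the multiplicities $(1)^{m-1},(-1)^{m-1}$ are exactly right. Figure \ref{fg3} already exhibits the shape of a $\lambda=\pm1$ eigenvector, so I would use it to confirm the arm-pattern construction is consistent.
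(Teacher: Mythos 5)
Your proof is correct, and while it lands on the same eigenvectors as the paper, it gets there by a genuinely different route. The paper's proof is exhibit-and-verify: it writes down explicit eigenvectors for $\pm\sqrt{m+1}$ and $0$ (constant on each degree class), and obtains the multiplicity $m-1$ for $\pm 1$ by applying arm-permuting automorphisms to the single eigenvector displayed in Figure \ref{fg3}, with completeness of the list left implicit in the final count. You instead derive the spectrum from the eigenvalue equations via an invariant-subspace (equitable-partition style) decomposition: the symmetric part reduces to a $3\times 3$ quotient system whose characteristic polynomial $\lambda\left(\lambda^2-(m+1)\right)$ you compute, and the complementary part, where the center entry and the arm sums vanish, decouples into $2\times 2$ arm systems forcing $\lambda=\pm 1$ with an $(m-1)$-dimensional solution space for each sign. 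Your approach buys two things the paper glosses over: the eigenvalues $0,\pm\sqrt{m+1}$ are derived rather than guessed, and the multiplicity claim is airtight because you explicitly rule out the symmetric block contributing extra copies of $\pm 1$ when $m>1$ and you make the dimension count $2(m-1)+3=2m+1$ explicit. The paper's argument is shorter once the eigenvectors are in hand, and its automorphism observation (that $AP=PA$ sends eigenvectors to eigenvectors of the same eigenvalue) is a reusable trick, but both proofs ultimately close with the same completeness-by-counting step.
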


\begin{proof} The eigenvectors of $SK_{1,m}$ corresponding to $\pm\sqrt{m+1}$ are obtained by assigning the values $1,~\pm\sqrt{m+1}$ and $m$ to all vertices of degree equal to $1,~2$ and $m$, respectively. Similarly, an eigenvector corresponding to $0$ is obtained by assigning the values $1,0$ and $-1$ to all vertices of degree equal to $1,~2$ and $m$, respectively. Let $P$ be the permutation matrix corresponding to an automorphism of $SK_{1,m}.$ If $A$ is the adjacency matrix of $SK_{1,m}$ then $AP=PA.$ Thus, if $\x$ is an eigenvector $SK_{1,m}$ with $A\x=\lambda\x,$ then $P\x$ is also an eigenvector corresponding to $\lambda$ as $AP\x=PA\x=\lambda P\x.$ Thus, we get $m-1$ linearly independent eigenvectors corresponding to $\pm1$ as shown in Figure \ref{fg3}. This yields the spectra of $SK_{1,m}$. 
\end{proof}

Thus, $SK_{1,m}$ is integral if and only if $\sqrt{m+1}$ is an integer. This yields the following:

\begin{cor}
The graph $SK_{1,m}$ is periodic if and only if $\sqrt{m+1}$ is an integer.
\end{cor}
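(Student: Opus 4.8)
The plan is to read off the spectrum of $SK_{1,m}$ from Proposition \ref{p6} and then exploit the fact that, for a connected graph, being periodic (at every vertex simultaneously) is equivalent to the transition matrix becoming a scalar multiple of the identity, $U(\tau)=\gamma I$, at some time $\tau>0$. Throughout I take $m>1$, as in Proposition \ref{p6}, so that the distinct eigenvalues of $SK_{1,m}$ are exactly $\{0,\pm 1,\pm\sqrt{m+1}\}$; note in particular that $\pm 1$ genuinely occur here because their multiplicity $m-1$ is positive, and that $\sqrt{m+1}>1$, so these five values are distinct.

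First I would justify the reduction. If $SK_{1,m}$ is periodic at time $\tau$, then $|U(\tau)_{w,w}|=1$ for every vertex $w$; since $U(\tau)$ is unitary each of its rows is a unit vector, so a diagonal entry of modulus one forces the remaining entries of that row to vanish, and hence $U(\tau)$ is a diagonal unitary matrix $D$. Because $U(\tau)$ is a polynomial in $A$ it commutes with $A$, so $DAD^{-1}=A$; comparing the $(w,w')$ entries for an edge $ww'$ gives $D_{w,w}=D_{w',w'}$, and connectivity then forces $D=\gamma I$. Conversely $U(\tau)=\gamma I$ clearly makes every vertex periodic. Writing $U(\tau)=\sum_j e^{-i\tau\lambda_j}E_j$, the condition $U(\tau)=\gamma I$ is equivalent to $e^{-i\tau\lambda_j}=\gamma$ for every distinct eigenvalue $\lambda_j$.

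Next I would run this condition against the explicit spectrum. Since $0\in\sigma(SK_{1,m})$, evaluating at $\lambda=0$ forces $\gamma=1$, so periodicity amounts to $\tau\lambda\in 2\pi\Zl$ for each $\lambda\in\{1,\sqrt{m+1}\}$ (the negative eigenvalues give no new constraint). The eigenvalue $\lambda=1$ forces $\tau=2\pi N$ for some positive integer $N$, and then the eigenvalue $\lambda=\sqrt{m+1}$ forces $N\sqrt{m+1}\in\Zl$. As $N\neq 0$, this means $\sqrt{m+1}\in\Ql$, and a rational square root of the positive integer $m+1$ must be an integer. Conversely, if $\sqrt{m+1}\in\Zl$ then all eigenvalues of $SK_{1,m}$ are integers, so $U(2\pi)=\sum_j e^{-2\pi i\lambda_j}E_j=\sum_j E_j=I$ and the graph is periodic. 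This gives the claimed equivalence.

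The only genuinely delicate point is recognizing that the integer eigenvalue $1$ (present precisely because $m>1$) rules out periodicity through a common irrational scale: without it, a spectrum such as $\{0,\pm\sqrt{m+1}\}$ would be periodic for every $m$ via $\tau=2\pi/\sqrt{m+1}$, exactly as happens for $SK_{1,1}=P_3$. Thus the crux is the simultaneous presence of the rational eigenvalue $1$ and the candidate irrational eigenvalue $\sqrt{m+1}$, together with the elementary number-theoretic fact that $\sqrt{m+1}\in\Ql$ implies $\sqrt{m+1}\in\Zl$. Everything else is routine.
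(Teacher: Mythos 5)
Your proof is correct, and it is more self-contained than the paper's. The paper offers essentially no argument for this corollary: having computed the spectrum $\{0,\pm 1,\pm\sqrt{m+1}\}$ in Proposition \ref{p6}, it treats the statement as immediate from the standard characterization of periodic graphs (the same result underlying Theorem \ref{per}): a connected graph is periodic if and only if its eigenvalues are either all integers or all integer multiples of a common $\sqrt{\Delta}$ with $\Delta>1$ square-free; since $1$ is an eigenvalue when $m>1$, the second alternative is impossible, so periodicity is equivalent to integrality, i.e.\ to $\sqrt{m+1}\in\Zl$. You instead re-derive the needed periodicity criterion from first principles: the reduction of simultaneous periodicity to $U(\tau)=\gamma I$ via unitarity of the rows plus connectivity, then reading off $e^{-i\tau\lambda_j}=\gamma$ on each distinct eigenvalue from the spectral decomposition, then the elementary number theory forced by the eigenvalues $0$, $1$ and $\sqrt{m+1}$. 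Both routes hinge on the same crux --- the coexistence of the integer eigenvalue $1$ (present exactly when $m>1$) with the candidate irrational $\sqrt{m+1}$ --- and you flag this correctly, including the $m=1$ caveat ($SK_{1,1}=P_3$ has spectrum $\{0,\pm\sqrt{2}\}$ and is periodic even though $\sqrt{2}\notin\Zl$), a hypothesis the paper leaves implicit. What your approach buys is independence from the cited characterization, since you prove the special case of it that is actually needed; what the paper's approach buys is brevity, at the cost of leaning on an external theorem.
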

Now we find the support of the vertex of degree $m$ in $SK_{1,m}$ as follows.
\begin{lem}\label{sslm1}
The eigenvalue support of a vertex $u$ of degree $m$ in $SK_{1,m}$ is given by
\[\sigma_u\left(SK_{1,m}\right)=\cb{\sqrt{m+1},~0,~-\sqrt{m+1}}.\]
\end{lem}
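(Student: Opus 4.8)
The plan is to use the standard characterization that $\lambda\in\sigma_u(SK_{1,m})$ if and only if the projection $E_\lambda\e_u$ is nonzero, equivalently if and only if there is an eigenvector for $\lambda$ whose component at $u$ is nonzero. Since Proposition \ref{p6} already supplies the full spectrum $\cb{\pm\sqrt{m+1},\pm1,0}$, it suffices to decide, for each of these five eigenvalues, whether some associated eigenvector fails to vanish at the degree-$m$ vertex $u$.

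First I would fix notation: let $u$ be the central vertex of degree $m$, let $b_1,\ldots,b_m$ be the degree-two (middle) vertices, and let $a_1,\ldots,a_m$ be the pendant vertices, with $a_i\sim b_i\sim u$. Reading an eigenvector as a function $\x$ and applying the local relation $A\x=\lambda\x\Longleftrightarrow\sum_{v\sim w}\x(v)=\lambda\x(w)$ stated just before Proposition \ref{p6}, the three vertex types yield
\[
\sum_{i=1}^m \x(b_i)=\lambda\x(u),\qquad \x(u)+\x(a_i)=\lambda\x(b_i),\qquad \x(b_i)=\lambda\x(a_i).
\]
For $\lambda=\pm1$, substituting $\x(b_i)=\lambda\x(a_i)$ into the middle equation gives $\x(u)+\x(a_i)=\lambda^2\x(a_i)=\x(a_i)$, forcing $\x(u)=0$. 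Hence every eigenvector for $+1$ and every eigenvector for $-1$ vanishes at $u$, so $\e_u$ is orthogonal to both eigenspaces and $E_{\pm1}\e_u=\o$. This removes $\pm1$ from $\sigma_u$.

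It then remains to verify that $\sqrt{m+1}$, $0$, and $-\sqrt{m+1}$ lie in $\sigma_u$, which is immediate from the eigenvectors displayed in the proof of Proposition \ref{p6}: the eigenvector for $\pm\sqrt{m+1}$ assigns the value $m\neq0$ to $u$, and the eigenvector for $0$ assigns $-1\neq0$ to $u$. As each of these three eigenvalues is simple, the projection $E_\lambda\e_u$ is a nonzero scalar multiple of the corresponding eigenvector and hence nonzero. Combining this with the spectrum from Proposition \ref{p6} gives $\sigma_u(SK_{1,m})=\cb{\sqrt{m+1},\,0,\,-\sqrt{m+1}}$.

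The argument is entirely elementary, so there is no serious obstacle; the one step deserving care is the forced vanishing $\x(u)=0$ for the $\pm1$ eigenvectors, which is precisely what collapses those two eigenvalues out of the support. A conceptual alternative, if a computation-free presentation were preferred, is to observe that the $\pm1$-eigenvectors lie in the subspace on which the arm-permuting automorphisms of $SK_{1,m}$ act without fixing $\e_u$ (the arm values summing to zero), whereas $\e_u$ is fixed by every such automorphism; orthogonality of these two subspaces then yields $E_{\pm1}\e_u=\o$ directly.
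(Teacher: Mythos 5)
Your proposal is correct and follows essentially the same route as the paper: certify $\pm\sqrt{m+1}$ and $0$ via the explicit eigenvectors of Proposition \ref{p6} (which are nonzero at $u$), and eliminate $\pm1$ by using the local eigen-equations at a leaf and its neighbouring degree-two vertex to force every $(\pm1)$-eigenvector to vanish at $u$, hence $E_{\pm1}\e_u=\o$. Your version is in fact slightly cleaner, since the paper needlessly assumes a neighbour of $u$ with nonzero eigenvector entry, an assumption your substitution argument shows is not required.
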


\begin{proof} The eigenvectors of $SK_{1,m}$ corresponding to $-\sqrt{m+1}, \sqrt{m+1}$ and $0$, as given in Proposition \ref{p6}. Thus, $-\sqrt{m+1},\sqrt{m+1}$ and $0$ are contained in the eigenvalue support of $u.$ We claim that $\pm1\not\in\sigma_u\left(SK_{1,m}\right).$ It is enough to show that if $\x_{\pm}$ is an eigenvector corresponding to $\pm1$ then $\e_u^T\x_{\pm}= 0.$ Let $v$ be a neighbour of $u$ with $\e_v^T\x_{\pm}\neq 0.$ If $w$ is the leaf adjacent to $v,$ then $\pm1\cdot\e_w^T\x_{\pm}=\e_v^T\x_{\pm} $ and $\pm1\cdot\e_v^T\x_{\pm}=\e_w^T\x_{\pm}+\e_u^T\x_{\pm}.$ Thus $\e_u^T\x_{\pm}=0$, as desired.
\end{proof}

Lemma \ref{sslm1} and Proposition \ref{pp1} combined with Theorem \ref{per} yields the following result.   
\begin{cor}
$SK_{1,m}$ is periodic at the vertex of degree $m$ for all $m>0.$ Moreover, for all $n>0$, the graph $\up{n}SK_{1,m}$ with $m>1$ is periodic at each vertex of degree $nm.$
\end{cor}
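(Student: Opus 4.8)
The plan is to read the eigenvalue support of the hub vertex off Lemma \ref{sslm1} and feed it into the dichotomy of Theorem \ref{per}. Write $c$ for the vertex of degree $m$ in $SK_{1,m}$, so that Lemma \ref{sslm1} gives $\sigma_c(SK_{1,m})=\cb{\sqrt{m+1},\,0,\,-\sqrt{m+1}}$. Factor $m+1=k^2\Delta$ with $k\geq 1$ and $\Delta$ square-free. If $\Delta=1$, then $\sqrt{m+1}=k\in\Zl$ and the support $\cb{0,\pm k}$ consists entirely of integers; if $\Delta>1$, then $\sqrt{m+1}=k\sqrt{\Delta}$ and the support $\cb{0,\pm k\sqrt{\Delta}}$ consists of integer multiples of $\sqrt{\Delta}$ (reading $0=0\cdot\sqrt{\Delta}$). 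Either way, $\sigma_c(SK_{1,m})$ satisfies exactly the hypothesis of Theorem \ref{per}.

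For the first assertion I would apply the periodicity criterion underlying Theorem \ref{per} (equivalently Theorem \ref{per} with $n=1$, since $\up{1}SK_{1,m}\cong SK_{1,m}$): as the support condition holds, $SK_{1,m}$ is periodic at $c$. For $m=1$ the graph is $P_3$ and $c$ is an end vertex, whose support is again $\cb{0,\pm\sqrt2}$, so the same conclusion stands; thus periodicity at the degree-$m$ vertex holds for all $m>0$.

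For the second assertion the key observation is that the dichotomy in Theorem \ref{per} is a condition on $\sigma_c(SK_{1,m})$ alone and is independent of $n$. Hence, for every $n>0$, Theorem \ref{per} yields periodicity of $\up{n}SK_{1,m}$ at all vertices of $S_c=\cb{(j,c):j\in\Zl_n}$; alternatively one verifies periodicity for a single $n$ and invokes the final clause of Theorem \ref{per}. It then remains to identify $S_c$ with the vertices of degree $nm$: since the degree of $(j,u)$ in $\up{n}G$ equals $n\deg_G(u)$, and (for $m\geq 3$) the center is the unique vertex of $SK_{1,m}$ of degree $m$ (the middle vertices have degree $2$ and the leaves degree $1$), the degree-$nm$ vertices of $\up{n}SK_{1,m}$ are precisely the copies of $c$. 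Proposition \ref{pp1} can also be used here to confirm $\sigma_{(j,c)}(\up{n}SK_{1,m})=\cb{0,\pm n\sqrt{m+1}}$, which satisfies the same dichotomy.

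Since this is essentially a verification, there is no deep obstacle; the only points requiring care are matching the irrational support to Theorem \ref{per} via the square-free factorization of $m+1$ (and recognizing $0$ as a legitimate integer multiple of $\sqrt{\Delta}$), together with the bookkeeping that the degree-$nm$ vertices are exactly the blow-up copies of the hub. The latter deserves a remark in the boundary case $m=2$, where $SK_{1,2}\cong P_5$ has three vertices of degree $2$; there the intended degree-$m$ vertex is the central one, whose support is $\cb{0,\pm\sqrt3}$ as in Lemma \ref{sslm1}, while the other degree-$2$ vertices fall outside the scope of the statement.
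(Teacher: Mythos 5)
Your proof is correct and follows essentially the same route as the paper, whose proof is simply the citation of Lemma \ref{sslm1}, Proposition \ref{pp1} and Theorem \ref{per}; you have only filled in the details the paper leaves implicit (the square-free factorization of $m+1$, the $m=1$ case via $P_3$, and the bookkeeping identifying the degree-$nm$ vertices with copies of the hub). Your remark that for $m=2$ the literal phrase ``each vertex of degree $nm$'' would also capture copies of the non-central degree-$2$ vertices of $SK_{1,2}\cong P_5$, whose support $\cb{\pm 1,\pm\sqrt{3}}$ violates the dichotomy of Theorem \ref{per} and which are therefore not periodic, is a genuine caveat that the paper's one-line proof overlooks.
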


We now include the following characterization on PGST in subdivided star.

\begin{thm}
The graph $\up{2}SK_{1,m}$ admits PGST if and only if $\sqrt{m+1}$ is not an even integer. Moreover, if there is PGST on $\up{2}SK_{1,m}$ then it occurs between every pair of twin vertices of degree $4.$ 
\end{thm}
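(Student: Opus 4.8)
The plan is to reduce everything to a support computation followed by the integer-relation criterion of Theorem \ref{th1}. First I would determine the eigenvalue supports of the three vertex types of $SK_{1,m}$ — the central vertex of degree $m$, the $m$ middle vertices of degree $2$, and the $m$ leaves — directly from the explicit eigenvectors of Proposition \ref{p6}. Lemma \ref{sslm1} already records $\sigma_c(SK_{1,m})=\{\pm\sqrt{m+1},0\}$ for the central vertex $c$. For a leaf $w$, the $0$-eigenvector (value $1$ on all leaves) and the $\pm\sqrt{m+1}$-eigenvectors all have nonzero $w$-component, so $0\in\sigma_w(SK_{1,m})$. For a middle vertex $v$ I would establish $\sigma_v(SK_{1,m})=\{\pm\sqrt{m+1},\pm1\}$: the $\pm\sqrt{m+1}$-eigenvectors carry $\pm\sqrt{m+1}$ on middle vertices; the $(m-1)$-dimensional $\pm1$-eigenspaces consist of vectors vanishing at $c$ with $\x(v_i)=\pm\x(w_i)$ and $\sum_i\x(v_i)=0$, which admit representatives nonzero at $v$; and crucially the unique $0$-eigenvector vanishes on every middle vertex, so $0\notin\sigma_v(SK_{1,m})$.

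The structural payoff is that the middle vertices are exactly the vertices whose support omits $0$. By Theorem \ref{sc}(2), $(0,x)$ and $(1,x)$ are strongly cospectral in $\up{2}SK_{1,m}$ precisely when $x$ is a middle vertex, and such a pair can be strongly cospectral with no other vertex. Since strong cospectrality is necessary for PGST, this forces any PGST in $\up{2}SK_{1,m}$ to occur within a twin pair $\{(0,v),(1,v)\}$ of a middle vertex — and these are exactly the twin vertices of degree $4=2\cdot 2$. Because the automorphism group permuting the arms acts transitively on the middle vertices, all of them share one support, so it suffices to analyze a single pair and transfer the conclusion to every degree-$4$ twin pair.

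The heart of the argument is applying the criterion in Theorem \ref{th1}(4) to $\sigma_v(SK_{1,m})=\{\sqrt{m+1},1,-1,-\sqrt{m+1}\}$. Writing a general integer relation as $m_1\sqrt{m+1}+m_2-m_3-m_4\sqrt{m+1}=0$, the question is whether $(m_1-m_4)\sqrt{m+1}+(m_2-m_3)=0$ forces $m_1+m_2+m_3+m_4$ to be even. If $\sqrt{m+1}$ is irrational (i.e.\ $m+1$ is not a perfect square), the relation splits into $m_1=m_4$ and $m_2=m_3$, so the coefficient sum equals $2(m_1+m_2)$, always even, and PGST holds. If $\sqrt{m+1}$ is an integer $s$, setting $p=m_1-m_4$ forces $m_2-m_3=-ps$, whence $m_1+m_2+m_3+m_4=2(m_3+m_4)+p(1-s)$; this is even for every relation if and only if $1-s$ is even, i.e.\ if and only if $s$ is odd. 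Combining the two cases yields PGST exactly when $\sqrt{m+1}$ is not an even integer, and by the transitivity above it then occurs at every degree-$4$ twin pair simultaneously.

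The main obstacle is the support computation for the middle vertices — specifically confirming that $0$ genuinely drops out of $\sigma_v(SK_{1,m})$ while $\pm1$ remain. This rests on the simplicity of the eigenvalue $0$ (so its single eigenvector fully controls $E_0\e_v$) together with the explicit vanishing of that eigenvector on the middle vertices, which I would verify via the eigenvalue equation $\sum_{x\sim v}\x(x)=\lambda\x(v)$. Once the supports are fixed, the remaining arithmetic in Theorem \ref{th1}(4) is routine; the only delicate point is isolating the parity of $s=\sqrt{m+1}$ in the integral case to pin down the even-integer exception.
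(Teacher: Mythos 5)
Your proof is correct, and it reaches the conclusion by a genuinely different route for the core argument. The reduction step is the same as the paper's: you compute the eigenvalue supports from Proposition \ref{p6} (the simple $0$-eigenvector vanishes precisely on the middle vertices, so $\sigma_v(SK_{1,m})=\{\pm 1,\pm\sqrt{m+1}\}$ there, while leaves and the centre have $0$ in their supports), and then Theorem \ref{sc}(2) confines any possible PGST to the twin pairs $\{(0,v),(1,v)\}$ over middle vertices. Where you diverge is the PGST decision itself: you apply the integer-relation parity criterion, condition (4) of Theorem \ref{th1}, and settle everything algebraically --- when $\sqrt{m+1}\notin\Ql$ every relation forces $m_1=m_4$, $m_2=m_3$, so the coefficient sum $2(m_1+m_2)$ is even; when $\sqrt{m+1}=s\in\Zl$ the sum equals $2(m_3+m_4)+p(1-s)$, even for all relations exactly when $s$ is odd (and the single relation $\sqrt{m+1}-s\cdot 1=0$ kills the even case). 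The paper instead works through condition (2): Case I ($s$ even) gets a contradiction by raising $\lim_k\exp(-2i(\pm1)\tau_k)=-1$ to the power $s$; Case II ($s$ odd) exhibits actual PST at $\tau=\pi/2$ via Theorem \ref{th11}; Case III ($m+1$ not a square) constructs explicit time sequences via Kronecker's theorem (Theorem \ref{kron}). Your route is shorter and uniform, avoiding Kronecker entirely; the paper's route buys strictly more information --- in the odd-integer case it yields PST (not merely PGST) at an explicit time, and in the irrational case it produces the approximating sequence explicitly. Both arguments rely equally on the fact, established inside the proof of Theorem \ref{th1}, that the phase factor for strongly cospectral twin pairs in a blow-up is forced to be $-1$, which is what allows the failure of condition (4) (or of condition (2)) to exclude PGST outright. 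One shared caveat: your claim that the middle-vertex pairs are \emph{exactly} the degree-$4$ twin pairs fails for $m=2$ (the centre copies then also have degree $4$ but have $0$ in their support); this imprecision is inherited from the statement of the theorem itself and affects the paper's proof in the same way.
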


\begin{proof}
    We begin with the observation on the proof of Proposition \ref{p6} that the support of a vertex of degree $1$ and $m$ in $SK_{1,m}$ contains the eigenvalue $0.$ Using Theorem \ref{th1}, we find that there is no PGST in $\up{2}SK_{1,m}$ from either the vertices of degree $2$ or $2m.$ 
    However, the support of a vertex of degree $2$ in $SK_{1,m}$ contain the eigenvalues $\pm \sqrt{m+1}$ and $\pm 1$ only. Now consider the following cases.\\
\textbf{Case I:} Let $\sqrt{m+1}$ be even. If $\up{2}SK_{1,m}$ exhibits PGST then $\lim\limits_{k\to\infty}\exp{\left(-2i(\pm 1)\tau_k\right)}= -1.$ However
\[\lim\limits_{k\to\infty}\exp{\left(-2i\left(\pm \sqrt{m+1}\right)\tau_k\right)}=\left[\lim\limits_{k\to\infty}\exp{\left(-2i(\pm 1)\tau_k\right)}\right]^{\sqrt{m+1}}= 1,\]
a contradiction to Theorem \ref{th1}. In this case, $\up{2}SK_{1,m}$ does not have PGST.\\
\textbf{Case II:} Suppose $\sqrt{m+1}$ is an odd integer. Considering $\tau=\frac{\pi}{2}$, we obtain 
\[\exp{\left(-2i(\pm 1)\tau\right)}= -1, \text{ and } \exp{\left(-2i\left(\pm \sqrt{m+1}\right)\tau\right)}= -1.\]
By Theorem \ref{th11}, $\up{2}SK_{1,m}$ admits PST between every pair of twins of degree $4$.\\
\textbf{Case III:} If $m+1$ is not a perfect square then $1,\sqrt{m+1}$ are linearly independent over $\Ql.$ Assuming $\alpha=\frac{\sqrt{m+1}}{2}+\frac{1}{2}$ in Theorem \ref{kron}, for every $k\in\Nl$, there exists integers $p_k$ and $q_k$ such that
   \begin{center}
       $\left| q_k\sqrt{m+1} - p_k - \frac{\sqrt{m+1}}{2}-\frac{1}{2}\right|<\frac{1}{2\pi k}.$
   \end{center}
    This further yields
    \begin{center}
        $\left| \left(2q_k-1\right)\frac{\pi}{2}\cdot 2\sqrt{m+1} - \left(2p_k+1\right)\pi \right|<\frac{1}{k}.$
    \end{center}
 Considering $\tau_k=\left(2q_k-1\right) \frac{\pi}{2},$ we find that \[\lim\limits_{k\to\infty}\exp{\left(-2i(\pm 1)\tau_k\right)}= -1, \text{ and } \lim\limits_{k\to\infty}\exp{\left(-2i\left(\pm\sqrt{m+1}\right)\tau_k\right)}= -1.\]
By Theorem \ref{th1}, we get that $\up{2}SK_{1,m}$ admits PGST between every pair of twins of degree $4$.% whenever $m+1$ is not a perfect square.
\end{proof}

\section{Cones}\label{cones}
A \textit{cone} $G$ on a graph $H$ is a graph with vertex set $V(G)=\{u\}\cup V(H)$ and $G$ is obtained by adding a vertex $u\notin V(H)$ that is adjacent to all vertices of $H$. We call $u$ the \textit{apex} of $G$. It can be checked that if $H$ is a $k$-regular graph on $m$ vertices and $G$ is a cone $K_1+H$, then $\up{2}G$ is a (disconnected) \textit{double cone} on $\up{2}H$, where $\up{2}H$ is a $2k$-regular graph on $2m$ vertices. That is, $\up{2}G$ is a graph obtained by adding two vertices $u$ and $w$ that are adjacent to all vertices of $\up{2}H$. We call $u$ and $w$ \textit{apexes} of the double cone. Periodicity, PST and PGST between apexes of disconnected double cones over regular graphs were characterized by Kirkland et al.\ \cite{kirk2}. The following is immediate from Theorems 11(1) and 14(2a) in \cite{kirk2}.

\begin{thm}\label{thc1}
Let $G$ be a cone on a $k$-regular graph on $m$ vertices with apex $u$. %The following hold in $\up{2}G$.

\begin{enumerate}
    \item PST occurs between $\left(0,u\right)$ and $\left(1,u\right)$ in $\up{2}G$ if and only if either (i) $k=0$ or (ii) $k>0$ and $m=\frac{1}{4}s\left(2k+s\right)$ for some even integer $s$ such that $\nu_2(2k)>\nu_2(s)$. Moreover, PST occurs at time $\tau=\frac{\pi}{2\sqrt{m}}$ whenever $k=0$ and $\tau=\pi/g$ otherwise, where $g=\operatorname{gcd}(2k,s)$.
    \item Proper PGST occurs between $\left(0,u\right)$ and $\left(1,u\right)$ in $\up{2}G$ if and only if $k>0$ and $k^2+4m$ is not a perfect square.
\end{enumerate}
\end{thm}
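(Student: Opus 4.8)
The plan is to reduce both parts to the characterizations of PST and PGST already established for blow-ups (Theorems \ref{th11} and \ref{th1}), after first pinning down the eigenvalue support of the apex. By Example \ref{cone}, the apex $u$ of the cone $G$ has $\sigma_u(G)=\{\lambda_+,\lambda_-\}$ with $\lambda_\pm=\tfrac12(k\pm\sqrt{k^2+4m})$, and since $m\ge 1$ forces $0\notin\sigma_u(G)$, Theorem \ref{sc}(2) guarantees that $(0,u)$ and $(1,u)$ are strongly cospectral in $\up{2}G$. Hence Theorems \ref{th11} and \ref{th1} apply verbatim to this pair, and the whole problem becomes an arithmetic question about the two numbers $\lambda_\pm$. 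Equivalently, one may observe that $\up{2}G$ is the disconnected double cone over the $2k$-regular graph $\up{2}H$ on $2m$ vertices and quote Theorems 11(1) and 14(2a) of \cite{kirk2} directly; I prefer the self-contained route through the paper's own machinery.

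For part 1 I would invoke the PST criterion of Theorem \ref{th11}(4), which asks that $\lambda_+$ and $\lambda_-$ be common integer multiples $b_\pm\sqrt\Delta$ of a square-free $\Delta$ with $\nu_2(b_+)=\nu_2(b_-)$. When $k=0$ one has $\lambda_\pm=\pm\sqrt m$; writing $m=c^2d$ with $d$ square-free gives $b_\pm=\pm c$, whose $\nu_2$ values agree, so PST always occurs, and condition (2) pins the time at $\tau=\pi/(2\sqrt m)$. When $k>0$, the identity $\lambda_++\lambda_-=k\ne 0$ rules out $\Delta>1$ (else the sum would be irrational), so PST forces $\Delta=1$, i.e.\ both $\lambda_\pm\in\Zl$; since $\lambda_\pm$ are the roots of $x^2-kx-m$, this happens exactly when $k^2+4m$ is a perfect square. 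Setting $s=\sqrt{k^2+4m}-k$ (a positive integer), one gets $\lambda_+=k+\tfrac{s}{2}$, $\lambda_-=-\tfrac{s}{2}$, and $m=\tfrac14 s(2k+s)$; integrality of $\lambda_\pm$ demands $s$ even, and a short $2$-adic computation shows $\nu_2(\lambda_+)=\nu_2(\lambda_-)$ precisely when $\nu_2(k)>\nu_2(s/2)$, i.e.\ $\nu_2(2k)>\nu_2(s)$. This reproduces condition (ii). The time $\tau=\pi/g$ with $g=\gcd(2k,s)$ then follows from condition (2) by checking that both $(2k+s)/g$ and $s/g$ are odd, which is where the hypothesis $\nu_2(2k)>\nu_2(s)$ is used again.

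For part 2 I would combine the PGST criterion of Theorem \ref{th1}(4) with the periodicity test of Theorem \ref{per}. If $k^2+4m$ is not a perfect square, then $\sqrt{k^2+4m}$ is irrational and any integer relation $m_+\lambda_++m_-\lambda_-=0$ splits into rational and irrational parts to force $m_+=m_-=0$; thus the only relation is trivial, $\sum m_j=0$ is even, and PGST occurs. Moreover $\lambda_\pm$ are neither both integers nor both integer multiples of a single $\sqrt\Delta$ (their sum $k$ is a nonzero integer), so by Theorem \ref{per} the apex is not periodic and the PGST is proper. Conversely, if $k^2+4m$ is a perfect square, or if $k=0$ (where $\lambda_\pm=\pm\sqrt m$ are integer multiples of $\sqrt d$ for $d$ the square-free part of $m$), then Theorem \ref{per} makes the apex periodic, so any PGST coincides with periodicity and is not proper. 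This gives exactly the stated dichotomy.

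The routine but delicate part is the $2$-adic bookkeeping in part 1: matching my parameter $s=\sqrt{k^2+4m}-k$ to the parametrization $m=\tfrac14 s(2k+s)$, and verifying both the equivalence $\nu_2(\lambda_+)=\nu_2(\lambda_-)\iff\nu_2(2k)>\nu_2(s)$ and the minimality of the PST time $\tau=\pi/g$. The latter is the one genuine obstacle, since I must rule out smaller valid $\tau$; this is precisely the place where leaning on the explicit computation in \cite{kirk2} is most convenient.
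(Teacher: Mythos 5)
Your proof is correct, but it takes a genuinely different route from the paper's. The paper's entire proof of this theorem is a citation: it observes that $\up{2}G$ is a disconnected double cone over the $2k$-regular graph $\up{2}H$ on $2m$ vertices and declares the result ``immediate from Theorems 11(1) and 14(2a)'' of \cite{kirk2}, which characterize PST and proper PGST between the apexes of such double cones. You instead rebuild the theorem from the paper's internal machinery: Example \ref{cone} for $\sigma_u(G)=\left\{\tfrac12\left(k\pm\sqrt{k^2+4m}\right)\right\}$, Theorem \ref{sc}(2) for strong cospectrality (valid since $\lambda_+\lambda_-=-m\neq 0$), Theorem \ref{th11}(4) for PST, Theorem \ref{th1}(4) for PGST, and Theorem \ref{per} for the non-periodicity that makes the PGST proper. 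Your arithmetic checks out: for $k>0$ the identity $\lambda_++\lambda_-=k$ forces $\Delta=1$; integrality of $\lambda_\pm$ is equivalent to $k^2+4m$ being a perfect square (note $\sqrt{k^2+4m}\equiv k \pmod 2$ holds automatically, so no extra parity condition is needed); the substitution $s=\sqrt{k^2+4m}-k$ gives $\lambda_+=k+s/2$, $\lambda_-=-s/2$, $m=\tfrac14 s(2k+s)$; the valuation equivalence $\nu_2(k+s/2)=\nu_2(s/2)\iff\nu_2(k)>\nu_2(s/2)$ is exactly right (split into the cases $\nu_2(k)$ less than, equal to, or greater than $\nu_2(s/2)$); and when $\nu_2(2k)>\nu_2(s)$ both $s/g$ and $(2k+s)/g$ are odd, confirming PST at $\tau=\pi/g$. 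The rational/irrational splitting in part 2 and the periodicity dichotomy via Theorem \ref{per} are likewise sound. What each approach buys: the paper's citation is short and inherits extra information from \cite{kirk2} (such as minimality of the PST time), while your derivation is self-contained, demonstrates that the paper's own blow-up theorems suffice for this application, and makes transparent the arithmetic origin of conditions (i) and (ii). One correction to your final remark: the ``genuine obstacle'' of ruling out smaller PST times is not actually required, since the theorem only asserts that PST occurs at $\tau=\pi/g$, not that this time is minimal; your computation already establishes everything the statement claims.
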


We illustrate the above theorem using the following example.

\begin{exm}
\label{4}
    Let $G$ be a cone on $H$ with apex $u$, where $H$ is $k$-regular with $m$ vertices.
\begin{enumerate}
\item If $H$ is the empty graph on $m$ vertices, then $k=0$ and $G=K_{1,m}$. By Theorem \ref{thc1}(1), PST occurs between $\left(0,u\right)$ and $\left(1,u\right)$ in $\up{2}G$ at $\frac{\pi}{2\sqrt{m}}$ for all $m\geq 1$. 
\item For all $m\geq 3$, let $H=C_m$, so that $k=2$.
\begin{enumerate}
    \item Let $m+1$ be a perfect square. Since $m=\frac{1}{4}s(2k+s)$, we obtain $s=2(\pm \sqrt{m+1}-1)$. By Theorem \ref{thc1}(1), we get PST between $\left(0,u\right)$ and $\left(1,u\right)$ in $\up{2}G$ if and only if $m$ is odd.
    \item If $m+1$ is not a perfect square, then so is $k^2+4m=4(m+1)$. By Theorem \ref{thc1}(2), proper PGST occurs between $\left(0,u\right)$ and $\left(1,u\right)$ in $\up{2}G$.
\end{enumerate}
\item Let $H=K_{r,r}$ be a complete bipartite graph. Then $k^2+4m=(r-1)^2+4(2r)$ is not a perfect square. By Theorem \ref{thc1}(2), proper PGST occurs between $\left(0,u\right)$ and $\left(1,u\right)$ in $\up{2}G$ for all $r\geq 2$.
\end{enumerate}
\end{exm}

\section{Cartesian products}\label{cons}

PST is rare, and this fact motivates us to search for new graphs with PST. It is desirable for these graphs to be sparse, and for PST to occur between vertices that are far apart. For the case of blow-up graphs, PST in $\up{2}G$ can only occur between vertices at distance two. Nonetheless, we may still utilize our results herein to illustrate the construction of relatively sparse graphs using blow-up graphs, with the property that PST occurs between vertices that are far apart. To do this, we make use of the Cartesian product operation.

Recall that the Cartesian product of graphs $G$ and $H$, denoted $G\square H$, is the graph with adjacency matrix $A(G)\otimes I+I\otimes A(H)$. Thus,
\begin{center}
    $U_{G\square H}(t)=U_G(t)\otimes U_H(t)$
\end{center}
Consequently, PST occurs between $(u,x)$ and $(v,y)$ in $G\square H$ whenever PST occurs between $u$ and $v$ in $G$ and PST occurs between $x$ and $y$ in $H$ at the same time. Combining this fact with \cite[Theorem 4.2]{pal5} yields the following observations.

\begin{exm}
    Suppose $\up{2}G$ has PST between $u$ and $v$ at time $\tau=\frac{\pi}{g}$. Denote the hypercube of dimension $d$ by $Q_d$, which is known to admit PST between  vertices $x$ and $y$ that are at distance $d$. If $g$ is an even integer, then $(\up{2}G)\square Q_d$ has PST between $(u,x)$ and $(v,y)$ for all $d\geq 2$ at time $\frac{\pi}{2}$. However, if $g\notin\mathbb{Q}$, then $(\up{2}G)\square Q_d$ has PGST between $(u,x)$ and $(v,y)$ for all $d\geq 2$.
\end{exm}

\begin{exm}
From Example \ref{4}(1), we know $\up{2}K_{1,m}$ admits PST, say between vertices $u$ and $v$, at time $\frac{\pi}{2\sqrt{m}}$ for all $m\geq 1$. Let $X=X(n)$ be the Cartesian product of $n$ copies of $P_3$, which is known to admit PST between vertices $x$ and $y$ that are at distance $2^n$. If $m=2n^2$, then  $(\up{2}G)\square X$ has PST between $(u,x)$ and $(v,y)$ at time $\frac{\pi}{\sqrt{2}}$. Otherwise, $(\up{2}G)\square X$ has PGST between $(u,x)$ and $(v,y)$.
\end{exm}

\section{Future work}

In this paper, we characterized strong cospectrality, periodicity, PST and PGST in blow-up graphs, and applied our results to the blow-ups of complete graphs, cycles, paths, double stars, subdivided stars and cones. Our results 

One line of investigation would be to determine the quantum state transfer properties of blow-ups of other classes of graphs such as Cayley and trees in general.

We would also like to know whether Conjecture \ref{con} is true. If it is not, then for which integers $t\geq 2$ and odd primes $r$ does $\up{2}P_{2^tr-1}$ exhibit PGST between $(0,u)$ and $(1,u)$, whenever $u$ is a multiple of $2^{t-1}$?

Lastly, if $G$ is a cone on a non-regular graph, then when does $\up{2}G$ admit periodicity and PGST?

\section*{Acknowledgement}
We thank C.\ van Bommel for providing the $P_{11}$ example. % example that lead us to Conjecture \ref{con}. 
Hermie Monterde is supported by the University of Manitoba Faculty of Science and Faculty of Graduate Studies. Hiranmoy Pal is supported by Science and Engineering Research Board (Project No. SRG/2021/000522).

%%%%%%% THE BIBLIOGRAPHY %%%%%%%
\bibliographystyle{unsrt}
\bibliography{edge_perturbation}

\end{document}